\documentclass[a4paper, 11pt]{article}
\usepackage{booktabs} 

\usepackage[colorlinks=true,linkcolor=blue,citecolor=blue]{hyperref}
\usepackage[utf8]{inputenx}
\usepackage{pgfplots}
\usepackage{graphicx,wrapfig}
\usepackage{amsmath,amsfonts}
\usepackage{mathrsfs}
\usepackage[left=1in,top=1in,right=1in,bottom=1.3in]{geometry}
\usepackage{float}
\usepackage{mathtools}
\usepackage{listings}
\usepackage{color}
\usepackage[english]{babel}
\usepackage{amssymb}
\usepackage{amsthm}
\usepackage[ruled,vlined,linesnumbered]{algorithm2e}

\SetAlFnt{\small}
\SetAlCapFnt{\small}
\SetAlCapNameFnt{\small}
\SetAlCapHSkip{0pt}
\IncMargin{-\parindent} 
\usepackage{nicefrac}

\usepackage[square,numbers, sort&compress]{natbib}
\usepackage{sidecap}
\usepackage{subcaption}
\usepackage{fancyhdr}
\usepackage{multirow}
\usepackage[skip = 6pt]{parskip}
\usepackage{tagging}
\usepackage[T1]{fontenc}
\pgfplotsset{compat=1.18}

\usepackage{cleveref}
\usepackage{textcomp}

\usepackage{bm}
\usepackage{bbm}

\usepackage{array}

\allowdisplaybreaks

\usepackage{footmisc}
\usepackage[most]{tcolorbox}
\usepackage{tabularx}

\newtheoremstyle{newthm}
{5pt}
{}
{\itshape}
{}
{\bfseries}
{:}
{.5em}
{}

\theoremstyle{newthm}
	\newtheorem{theorem}{Theorem}[section]
    \newtheorem*{theorem*}{Theorem}
	\newtheorem{lemma}[theorem]{Lemma}
	\newtheorem{proposition}[theorem]{Proposition}
	\newtheorem{corollary}[theorem]{Corollary}
	
	\newtheorem{definition}[theorem]{Definition}
	\newtheorem{example}[theorem]{Example}
    \newtheorem*{Assumption}{Assumption}
\theoremstyle{remark}
	\newtheorem{remark}[theorem]{Remark}

\makeatletter
\def\namedlabel#1#2{\begingroup
   \def\@currentlabel{#2}%
   \label{#1}\endgroup
}
\makeatother

\newenvironment{assumption}[2]{\begin{Assumption}[\textbf{#1}] \namedlabel{#2}{(\textbf{#1})} }{\end{Assumption}}

\usepackage{tikz}
\usepackage{tikz-cd}
\usetikzlibrary{fit,matrix}

\usepackage[shortlabels]{enumitem}

\DeclareMathOperator*{\argmax}{arg\,max}

\newcommand{\agents}{\mathcal{N}}
\newcommand{\coalitions}{\mathcal{C}}
\newcommand{\edges}{\mathcal{E}}
\newcommand{\coalition}{C}
\newcommand{\networks}{X}
\newcommand{\network}{x}
\newcommand{\networkbis}{y}
\newcommand{\game}{\Gamma}
\newcommand{\payoff}{u}
\newcommand{\capacity}{\alpha}
\newcommand{\budget}{\beta}
\newcommand{\networksab}{X^{\capacity, \budget}}

\newcommand{\alg}{Generalized Deferred Acceptance}
\newcommand{\algshort}{\textbf{GDA}}

\title{A Unified Framework for Weighted Hypergraphic Networks and Fractional Matching}
\usepackage{authblk}

\author[*1]{Rémi Castera}
\author[*1]{Julien Fixary}
\author[1]{Rida Laraki}

\affil[1]{Moroccan Center for Game Theory, University Mohammed VI Polytechnic, Rabat, Morocco}
\affil[*]{{\small First authors, equal contribution \vspace{0.3cm}}}

\begin{document}
	
	\maketitle
\begin{abstract}
Network formation theory studies how agents create and maintain relationships, and the stability of those relationships with respect to individual incentives. A central stability concept in this literature is pairwise stability, introduced by Jackson and Wolinsky (1996) for unweighted networks (agents are either connected or not) and later extended by Bich and Morhaim (2020) to weighted networks (connections can have different intensities).
In this paper, we pursue two main objectives. First, we extend the notion of stability to networks defined on hypergraphs, where relationships may involve more than two agents simultaneously and where agents face budget constraints on the sum of the intensity of all their connections. We introduce a stability concept that preserves the core intuition of pairwise stability while generalizing it to relationships involving more than two agents, and that accounts for budget constraints. Second, we propose a stronger notion that we call full stability, inspired by stability concepts from matching theory, in which agents are allowed to adjust multiple connections simultaneously rather than through single-link deviations.
We give existence results for both stability notions under various assumptions, as well as explicit solutions or algorithms, and provide counter-examples for most cases that do not satisfy those assumptions, establishing an almost complete theory. Our framework provides a unified approach to constrained network formation in hypergraphic settings and builds a conceptual bridge between the theories of weighted network formation and fractional matching.
\end{abstract}
    
	\newpage

	\section{Introduction}

    The aim of this paper is to establish a bridge between \emph{strategic weighted network formation theory} and \emph{fractional matching theory}.

    On the one hand, following the seminal contribution of \citet{jackson_wolinsky_1996}, network formation theory has been extensively studied over the last decades from a theoretical perspective \cite{demange_2005,jackson_watts_2001,JACKSON2002,JACKSON2002265,DUTTA1997322,JACKSON2005,calvo_2009,HERINGS2009526,KIRCHSTEIGER201697,BICH2023105577,herings_2024,bich_morhaim_2020,fixary_2022,Fixary_2025}.
    Its purpose is to analyze how networks emerge endogenously from the strategic interactions of self-interested agents, and how individual incentives shape the architecture of social, economic, or informational networks.
    In parallel, a large applied literature has used network formation models to study, among others, juvenile delinquency and conformism \cite{patacchini}, social networks in education \cite{calvo_patacchini_zenou}, R\&D collaboration networks \cite{konig_2019}, and social interactions within cities \cite{HELSLEY2014426}. Within this literature, particular attention has been devoted to stability concepts that capture the robustness of a network to individual or pairwise deviations.   
    One of the most prominent among them is \emph{pairwise stability}, introduced by \citet{jackson_wolinsky_1996} for unweighted networks where agents' decisions to form a link are binary. It was later extended by Bich and Morhaim \cite{bich_morhaim_2020} to weighted networks, in which agents are still represented as nodes of a graph, and the intensity of each link is determined endogenously and strategically by the two agents forming it. The link intensity represents for instance, the amount of time, effort, or resources devoted to their interaction. Given exogenous preferences over all possible networks, a weighted network is \emph{pairwise stable} if no agent wishes to unilaterally reduce the intensity of any of their existing links, and if no pair of agents can jointly increase the intensity of their mutual link in a way that benefits both of them.
    The weighted framework significantly broadens the scope of applications.
    In financial networks, weighted links naturally represent the exposure size in interbank lending.
    In communication networks, weighted links may represent attention.
    In labor markets, weights capture the intensity of referrals or mentoring relationships.
    In production, and R\&D networks, weighted links may correspond to trade volumes or collaboration effort.

    On the other hand, matching theory, which has a much longer history, studies how agents form disjoint pairs when they all have \emph{ordinal preferences}, i.e., when they rank all potential partners from best to worst. Using a cardinal payoff interpretation, a matching model can be formally be interpreted as a network formation game with linear payoff functions without externalities: the ranking is given by sorting the coefficients associated with each potential partner in the linear function, and there are no externalities since agents are only affected by their own connections. The fact that a matching consists of disjoint pairs implies that agents have a budget, since they can have only one partner, which is a noticeable difference with network models.
    The definition of stability, as well as the celebrated \emph{Deferred Acceptance} algorithm that finds a stable matching in any instance of the problem, were introduced by Gale and Shapley \cite{gale_college_1962} for the bipartite and discrete case. Since a stable matching is not always guaranteed to exist in non-bipartite graphs (called the \emph{roommate problem}), the literature has mostly focused on the bipartite case, mainly to study the structure of the set of stable matchings in the original model as well as dozens of variants \cite{mcVittie_marriage_1971, knuth_mariages_1976, dubins_machiavelli_1981, roth_economics_1982, roth_interns_1984, roth_allocation_1986, Balinski_97, Baiou_00b}. Later, researchers started investigating stability in \emph{fractional matchings}, i.e., matching where agents can divide their time between several partners. This extension is quite natural, as it corresponds to many real-world situations, for instance, doctors that divide their time between several hospitals. This study was initiated by \citet{Vandevate_89}, and quickly gained attention from the community \cite{Rothblum_92, Roth_93, Baiou_00, Baiou_02, Alkan_03, Moulin_04}. An interesting property of fractional matchings is that they recover existence of stable matchings in non-bipartite graphs, that have then gained more attention \cite{abeledo_94, teo_98, Aharoni_2003}. Generalizing even further, some researchers have studied stable matching in hypergraphs, i.e., when agents do not form just pairs but coalitions of any size and potentially overlapping \cite{Aharoni_2003, rostek_20, csaji_21, biro_25}.

    While there are many similarities between network formation theory and matching theory, the two literatures have evolved independently for the most part. Very recently, \citet{sadler_23} proposed a unified approach to unweighted network formation and discrete matching, assuming that payoffs have no externalities. In this article, our goal is to propose a framework that unifies weighted network formation and fractional matching in the most general possible setting, notably, we consider hypergraphs rather than just graphs and we allow for externalities.

    \subsection{Contributions}

    Strategic network formation theory guarantees the existence of pairwise stable networks under broad assumptions on preferences -- such as continuity and quasiconcavity -- but typically abstracts from budget constraints on agents' ability to allocate time, effort, or resources across relationships (see, e.g., \citet{bich_morhaim_2017}).
    By contrast, fractional matching and two-sided many-to-many matching models establish a stronger form of stability under budget constraints, albeit for highly restrictive classes of payoff functions, i.e., linear and free of externalities\footnote{Matching models generally use ordinal preferences, but these can be modeled as linear payoff functions without externalities.} (see, e.g., \citet{Baiou_02,Aharoni_2003}).
    We bridge these approaches by introducing a unified framework in which coalitions form links, agents face budget constraints, and payoff functions may be non-separable, non-linear, non-monotonic, and subject to externalities.

    More specifically, the first contribution of this paper is to extend the notion of pairwise stability for weighted networks in two distinct directions.
    First, instead of restricting attention to bilateral connections, we allow for more general coalitions to be assigned a weight.
    The underlying intuition in a time-sharing interpretation, is that agents may allocate their time not only to pairwise interactions, but also to activities involving larger groups.
    Second, we explicitly incorporate feasibility constraints into the model, namely, coalitional capacity constraints and individual budget constraints.
    Within the same time-sharing interpretation, these assumptions reflect two natural limitations: any coalition is constrained in the total amount of time its members can spend together, and each agent has a limited amount of time that must be allocated across all coalitions in which they participate.
    To account simultaneously for coalitional interactions and feasibility constraints, we introduce a stability concept -- referred to as \emph{stability} in this paper -- that extends both the notion of pairwise stability of \citet{jackson_wolinsky_1996}, and the one of \citet{bich_morhaim_2020}.
    Under similar continuity and quasiconcavity assumptions on the payoff functions as in \cite{bich_morhaim_2020}, we show that a stable weighted network always exists (see \Cref{thm:hypergraph_stable}).

    Nevertheless, the incorporation of budget constraints naturally raises the question of whether the stability condition described above is sufficiently demanding in this enriched framework.
    In particular, it seems natural to allow agents, when considering a deviation, to reallocate the weights of the coalitions in which they are involved, rather than only allowing them to change the weight of a single coalition at a time.
    In response to this observation, the second contribution of this paper is to introduce \emph{full stability} for strategic network games, a stronger notion of the classical pairwise stability that explicitly allows agents to reallocate their incident weights when considering deviations by a coalition.
    This notion remains closely related to the spirit of pairwise stability: it captures the absence of profitable unilateral decreases, while combining the idea of a consensual increase in a coalition's weight with the possibility of unilateral decreases on other incident coalitions in a natural and economically meaningful way.
    This is precisely where the link with matching theory is made. Indeed, while matching models (roommate, two-sided, many-to-many, etc.) can formally be seen as particular cases of network formation, they usually impose budget constraints (for instance, a university cannot recruit more than a certain number of students or professors), and they rely on a stronger notion of pairwise stability: an unmatched pair can -- and subject to their budget constraints, should -- break existing links in order to form a new pair that benefits to both parties. We show that matching problems can be formulated as network problems in our framework, and that full stability is equivalent to the classical notion of stability in matching theory (except for hypergraphs where the two notions differ very slightly). In particular, all existence results for stable fractional matchings in bipartite graphs become special cases of \Cref{thm:bipartite_full}. Moreover, we show in \Cref{prop:discrete_matching} that our framework also encapsulates discrete matching problems.

    However, the existence of a fully stable network is far from obvious, and such networks generally do not exist without additional assumptions on payoffs or constraints.
    We provide two simple counter-examples illustrating the non-existence of fully stable networks. The first considers a linear payoff function without budget constraints and with externalities (see \Cref{thm:counter}), while the second focuses on a quasiconcave but non concave payoff (convex, monotone, and separable), with budget constraints and without externalities (see \Cref{thm:counter_2}).
    The former highlights a major difficulty linked to the presence of negative externalities, whereas the latter illustrates how the lack of concavity can undermine existence even in the absence of external effects.
    Both examples exhibit different forms of cyclic improvement patterns in agents’ payoffs.
    To address these difficulties, we pursue two complementary approaches.
    The first focuses on the cycle identified in \Cref{thm:counter} in the unconstrained case. We provide a simple condition under which stability and full stability coincide (see \Cref{thm:hypergraph_full_wo_cnst}). We also study a particular case satisfying this condition, for which the explicit computation of all fully stable networks is possible (see \Cref{prop:min_argmax}).
    The second approach considers bipartite graphs (e.g., two-sided many-to-many fractional matchings) with budget constraints and positive externalities, thereby addressing the issues raised by \Cref{thm:counter_2}. In this setting, we provide an explicit algorithm that guarantees the existence of a fully stable network when payoffs are concave (see \Cref{thm:bipartite_full}). Interestingly, both the algorithm and the existence result remain valid for discrete many-to-many matching problems with payoffs that only need to be continuous, separable and concave rather than linear, which is, to our knowledge, a new contribution (see \Cref{prop:discrete_matching}).

    \Cref{fig:results_table} provides an overview of the paper's main contributions.
    \Cref{thm:hypergraph_stable} establishes the existence of a stable network in hypergraphic network games with budget constraints, under the assumption that payoff functions are continuous and quasiconcave in the intensity of each coalition separately\footnote{This requirement is substantially weaker than global quasiconcavity.}.
    The proof relies on the existence of a maximal element of an appropriate preference relation, following the approach of \citet{bich_morhaim_2017}.
    The remaining results concern full stability.
    \Cref{thm:counter} shows that negative externalities rule out the existence of fully stable networks, even with linear payoff functions and bipartite graphs.
    \Cref{thm:counter_2} further demonstrates that, even without externalities, full stability may fail under budget constraints when payoff functions are not concave -- in fact, the counter-example features convex, increasing, and separable payoff functions.
    In the absence of budget constraints, existence is recovered under separability (see \Cref{prop:min_argmax}).
    Under budget constraints, existence is restored under concavity in bipartite graphs, both in fractional and integer settings (see \Cref{thm:bipartite_full}), using a generalized propose-dispose algorithm.

    \begin{figure}[ht]
        \centering
        \hspace{-0.4cm}
        \makebox[\textwidth]{
        \begin{tikzpicture}
        \node[scale=0.805, xshift = 0 cm] (table){
        \begin{tabular}{|l|c|c|c|c|}
        \hline
        & \multicolumn{2}{c|}{stable} & \multicolumn{2}{c|}{fully stable} \\
        \cline{2-5}
        & no constraints & constraints & no constraints & constraints \\
        \hline
        bipartite graph  & Bich\&Morhaim \cite{bich_morhaim_2020} & \textbf{\Cref{thm:hypergraph_stable}} & \begin{tabular}{@{}c@{}}\textbf{\Cref{thm:counter}} (counter-ex.) (L) \\ \textbf{\Cref{prop:min_argmax}} (S, PE)\end{tabular} & \begin{tabular}{@{}c@{}}Baïou\&Balinski \cite{Baiou_02} (L, NE) \\ \textbf{\Cref{thm:counter}} (counter-ex.) (L) \\\textbf{\Cref{thm:bipartite_full}} (S, C, PE)\end{tabular}\\
        \hline
        graph  \begin{tabular}{@{}c@{}} \phantom{a} \\ \phantom{a}\end{tabular}         & Bich\&Morhaim \cite{bich_morhaim_2020} & \textbf{\Cref{thm:hypergraph_stable}} & \begin{tabular}{@{}c@{}}\textbf{\Cref{thm:counter}} (counter-ex.) (L) \\\textbf{\Cref{prop:min_argmax}} (S, PE)\end{tabular} & \begin{tabular}{@{}c@{}}Aharoni\&Fleiner \cite{Aharoni_2003} (L, NE)  \\\textbf{\Cref{thm:counter}} (counter-ex.) (L) \\\textbf{\Cref{thm:counter_2}} (counter-ex.) (S, NE) \end{tabular} \\
        \hline
        hypergraph  \begin{tabular}{@{}c@{}} \phantom{a} \\ \phantom{a}\end{tabular}    & \textbf{\Cref{thm:hypergraph_stable}} & \textbf{\Cref{thm:hypergraph_stable}} & \begin{tabular}{@{}c@{}}\textbf{\Cref{thm:counter}} (counter-ex.) (L) \\ \textbf{\Cref{prop:min_argmax}} (S, PE)\end{tabular} & \begin{tabular}{@{}c@{}}Aharoni\&Fleiner \cite{Aharoni_2003} (L, NE)  \\\textbf{\Cref{thm:counter}} (counter-ex.) (L) \\\textbf{\Cref{thm:counter_2}} (counter-ex.) (S, NE) \end{tabular}\\
        \hline
        \end{tabular}
        };

        \node[xshift = -8.8 cm, yshift=0.7cm] (bip) {};
        \node[xshift = -8.8 cm, yshift=-0.4cm] (gr) {};
        \node[xshift = -8.8 cm, yshift=-1.5cm] (hyp) {};
        \node[xshift = -4.9 cm, yshift=2.1cm] (snc) {};
        \node[xshift = -2.5 cm, yshift=2.1cm] (sc) {};
        \node[xshift = 1.2 cm, yshift=2.1cm] (fsnc) {};
        \node[xshift = 6.3 cm, yshift=2.1cm] (fsc) {};

        \draw[->, thick] (hyp.north west) to [out=150,in=-150] (gr.south west);
        \draw[->, thick] (gr.north west) to [out=150,in=-150] (bip.south west);
        \draw[->, thick] (sc.north west) to [out=130,in=50] (snc.north east);
        \draw[->, thick] (fsc.north west) to [out=130,in=50] (fsnc.north east);
        \draw[->, thick] (fsc.north west) to [out=130,in=50] (sc.north east);
        \draw[->, thick] (fsnc.north west) to [out=130,in=50] (snc.north east);
        \end{tikzpicture}}
        \caption{Summary of the existing work and of our contributions. Every entry indicates a proof of existence of a stable (or fully stable) network, in the context (type of graph, constraints or not) defined by the row and column labels, except for Theorems \ref{thm:counter} and \ref{thm:counter_2} which are counter-examples. The arrows indicate the nestedness of concepts: for instance, an existence result for hypergraphs in a given column implies existence in graphs, and thus in bipartite graphs, in the same column. For counter-examples, the implications go in the opposite direction. Some results are only true under some assumptions on the payoff functions, indicated in parenthesis: L = linear, S = additively separable, C = concave, NE = no externalities, PE = positive externalities. All results require at least continuous and quasiconcave payoff functions; those assumptions are thus omitted from the table.}
        \label{fig:results_table}
    \end{figure}

    \subsection{Literature on Network Formation Theory and Pairwise Stability}

    Since the seminal contribution of Jackson and Wolinsky \cite{jackson_wolinsky_1996}, the concept of \emph{pairwise stability} has become one of the central solution concepts in network formation theory. 

    In subsequent work, Jackson and Watts \cite{jackson_watts_2001} investigate the existence of pairwise stable unweighted networks and show that it is not automatic, since improving paths may cycle.
    They analyze such improving paths and identify conditions under which pairwise stable unweighted networks exist and can be reached through local adjustments.

    In his survey in \cite{demange_2005}, Jackson provides a systematic overview of models of network formation and stability notions.
    He reviews the definition of pairwise stability introduced by Jackson and Wolinsky, discusses its behavioral interpretation, and compares it with alternative notions of stability and equilibrium used in the literature.
    The survey also examines the relationship between pairwise stability, efficiency, and welfare, and illustrates how different modeling assumptions affect the set of stable unweighted networks across a variety of applications.

    More recent work has moved beyond binary links to study weighted networks, in which links have continuous intensities, and pairwise stability in this context.
    This extension is motivated by applications in which relationships vary in strength, capacity, or frequency.
    Bich and Morhaim \cite{bich_morhaim_2020} establish existence results for pairwise stable weighted networks under general continuity and quasiconcavity conditions, using fixed-point arguments.

    Beyond existence, a growing literature investigates the global and generic properties of the set of pairwise stable weighted networks.
    Bich and Fixary \cite{fixary_2022} study the topological structure of the graphs of pairwise stable weighted networks in network formation games.
   
    They show that, for broad classes of network games, generically the number of pairwise stable weighted networks is odd.
    This result implies robust existence and reveals deep connections between pairwise stability and classical equilibrium concepts in game theory and general equilibrium theory.
    Building on this perspective, Fixary \cite{Fixary_2025} further investigates the global geometry of graphs of pairwise stable weighted networks, establishing, in particular configurations, their unknottedness and related topological properties that shed light on network dynamics and equilibrium selection.

    Finally, the computational aspect of pairwise stability has gained increasing attention.
    Herings and Zhan \cite{herings_2024} develop a homotopy-based approach to compute pairwise stable weighted networks by reformulating pairwise stability as a Nash equilibrium of an auxiliary link-based game and applying a linear tracing procedure.
    Complementing this contribution, Chen, Tao, and Zhan \cite{chen_tao_zhan} focus on the efficient computation and selection of pairwise stable weighted networks in network formation games under differentiability and concavity assumptions. They introduce a logarithmic tracing procedure and a path-following algorithm, and demonstrate through numerical experiments that their method outperforms standard linear tracing and exhaustive search approaches.

    \subsection{Literature on Matching Theory}

    Matching theory started as a combinatorial optimization problem, where there were no preferences but the goal was to study the geometry of the set of feasible matchings, defined as a set of disjoint edges in a graph, and to find the largest one \cite{konig_graphs_1931, hall_representatives_1935, Edmonds1965, Lovasz_86}. The introduction of preferences and of the concept of stability is due to \citet{gale_college_1962}, along with the Deferred Acceptance algorithm that computes a stable matching for any instance of the problem. Their model is a many-to-one matching problem, i.e., the graph is bipartite, agents on the left side can only be matched to one agent, while agents on the right side can be matched to multiple agents, up to some budget constraint. The model was then extended to many-to-many matching (both sides have budgets greater than one) \cite{roth_many_84}. In terms of existence results, in the discrete case, the theory has been quite straightforward ever since the seminal paper \cite{gale_college_1962}: in a bipartite graph, a stable matching always exists, while in a non-bipartite graph (called the roommate problem) stable matchings may fail to exist.

    In a similar fashion, fractional matching was first studied as an object of optimization in graph theory and linear programming \cite{berge_79, lovasz_79, Furedi_81} before researchers started to inquire about their stability in the presence of preferences, basing their definition of stability on the one existing in the discrete case. \citet{Vandevate_89} proposed the first description of the polytope of fractional one-to-one stable matchings, later refined and simplified \cite{Rothblum_92, Roth_93}. Baïou and Balinski extended the model to many-to-one matching \cite{Baiou_00} and then to many-to-many matching \cite{Baiou_02}, proving existence of stable fractional matchings in this setting and providing two algorithms to compute them. The fractional roommate problem (i.e., matching in non-bipartite graphs) has also been studied, with \citet{abeledo_94} proving the existence of fractional matchings using a reduction to the bipartite case, and \citet{teo_98} who studies the polyhedral structure of the problem. Finally, \citet{Aharoni_2003} extend the model to hypergraphs, that is, when agents form coalitions of any size rather than just pairs, and prove the existence of a stable matching in this context. More recently, \citet{Caragiannis_19} studies a model of matching with cardinal utilities with the additional objective of maximizing social welfare, showing that fractional matchings can achieve better outcomes than integral ones.

    The two works closest to ours are \citet{Baiou_02}, and \citet{Aharoni_2003}. In the bipartite case, our results are more general since they allow for any separable and concave payoffs with positive externalities, rather than ordinal preferences, which can be represented equivalently by linear payoffs without externalities. For non-bipartite graphs and hypergraphs, \citet{Aharoni_2003} remains the state of the art, in the sense that we cannot prove existence of fully stable networks in constrained problems under weaker assumptions than theirs. However, in "weaker" contexts, like full stability in unconstrained problems, or stability in the network sense, which is weaker than the matching notion of stability, our existence results require fewer assumptions than those that are implied by Aharoni and Fleiner's result.

    \subsection{Outline}
    
    The paper is organized as follows. \Cref{sec:remind_net} introduces the basics definitions and notation of network formation theory, along with the classical existence result. \Cref{sec:hyp_net} extends the model to hypergraphs and provides a proof of existence in this context. In \Cref{sec:full_def}, we introduce the notion of full stability, explain how it links network formation theory to matching theory, and provide counter-examples showing that a fully stable network does not always exist under the assumptions of the previous existence results. Then, \Cref{sec:full_exist} shows that existence can be recovered under several different sets of assumptions. Finally, we discuss the remaining open questions in \Cref{sec:disc}.

	\section{Reminders from Network Formation Theory} \label{sec:remind_net}
	
	In this section, we recall the main concepts of strategic network formation theory (the notion of a network, pairwise stability, and related ideas).
	
	First of all, a \emph{set of agents} is a nonempty finite set $\agents$ and the \emph{set of coalitions (with respect to $\agents$)} is defined as the set
	$$\coalitions = 2^\agents \backslash \{\emptyset\}.$$
	Given a set $\agents$ of agents, a \emph{set of edges} is a nonempty subset $\edges$ of $\coalitions$.
    A couple $(\agents,\edges)$ is called an \emph{hypergraph}, unless $\vert \coalition \vert = 2$, for every $\coalition \in \edges$, then it is simply called a \emph{graph}. For any set of edges $\edges$,
    the \emph{set of (weighted) networks (with respect to $\agents$ and $\edges$)} is defined as
	$$\networks = [0,1]^\edges.$$
	
	Before continuing with our preliminaries, we introduce some practical notation that will be used throughout the paper.
	Let $\agents$ be a set of agents, $\edges$ be a set of edges, and let $\coalition \in \edges$ be a coalition.
	Define
	$$\edges_{-\coalition} = \edges \backslash \{\coalition\} \text{ and } \networks_{-\coalition} = [0,1]^{\edges_{-\coalition}}.$$
	For any $\networkbis_\coalition \in [0,1]$, and any $\network_{-\coalition} \in \networks_{-\coalition}$, $(\networkbis_\coalition,\network_{-\coalition}) \in \networks$ is the network defined by $(\networkbis_\coalition,\network_{-\coalition})_{\coalition'} = (\network_{-\coalition})_{\coalition'}$, for any $\coalition' \in \edges_{-\coalition}$, and $(\networkbis_\coalition,\network_{-\coalition})_\coalition = \networkbis_\coalition$.
	Last, for any network $\network \in \networks$, $\network_{-\coalition} \in \networks_{-\coalition}$ corresponds to the tuple of weights in $\network$ where the weight associated to $\coalition$ is missing.

	We now recall the definition of one of the main structure in strategic network formation theory.

    \begin{definition}[Network game]
		A \emph{network game} is a tuple
		$$\game = (\agents,\edges,\payoff),$$
		where $\agents$ is a set of agents, $\edges$ is a set of edges, and $\payoff: \networks \to \mathbb{R}^\agents$.
		For every agent $i \in \agents$, the map $\payoff_i$ which associates to any network $\network \in \networks$ the real number $\payoff_i(\network)$ is called the \emph{payoff function of agent $i$}.
	\end{definition}

    We also recall the pairwise stability concept of Bich and Morhaim \cite{bich_morhaim_2020}, for network games.

	\begin{definition}[Pairwise stable network]
		Let $\game = (\agents,\edges,\payoff)$ be a network game such that $(\agents,\edges)$ is a graph.
		A network $\network \in \networks$ is \emph{pairwise stable with respect to $\game$} if for every coalition $\{i,j\} \in \edges$ and every weight $\networkbis_{\{i,j\}} \in [0,1]$, the following two conditions hold:
		\begin{enumerate}
			\item If $\networkbis_{\{i,j\}} \leq \network_{\{i,j\}}$, then, for every $k \in \{i,j\}$, $\payoff_k(\networkbis_{\{i,j\}},\network_{-\{i,j\}}) \leq \payoff_k(\network)$.
			\item If $\network_{\{i,j\}} \leq \networkbis_{\{i,j\}}$, then there exists $k \in \{i,j\}$, $\payoff_k(\networkbis_{\{i,j\}},\network_{-\{i,j\}}) \leq \payoff_k(\network)$.
		\end{enumerate}
	\end{definition}

    Compared to the notion of Nash equilibrium in game theory, pairwise stability combines both noncooperative and cooperative features. On the one hand, there is no unilateral incentive to cut an existing link; on the other hand, the formation of a new link requires mutual consent, reflecting the absence of profitable bilateral deviations.

    Bich and Morhaim \cite{bich_morhaim_2020} also provide an existence result for their notion, for which two minimal assumptions are required.

    \begin{assumption}{C0}{asspt:cont}
        A network game is  \emph{continuous} if each agent $i$'s payoff function $\payoff_i$ is continuous.
    \end{assumption}
    
    \begin{assumption}{QC}{asspt:quasiconc}
        A network game is  \emph{quasiconcave} if each agent $i$'s payoff function $\payoff_i$ is component-wise quasiconcave\footnote{Formally, this means that $\forall i \in \agents, \forall \coalition \in \edges, \forall \network_{-\coalition}\in\networks_{-\coalition}$, $\payoff_i(\cdot, \network_{-\coalition})$ is quasiconcave. This assumption is weaker that global quasiconcavity.}.
    \end{assumption}
    
	\begin{theorem*}[Bich--Morhaim \cite{bich_morhaim_2020}]
	   Let $\game = (\agents,\edges,\payoff)$ be a continuous \ref{asspt:cont} and quasiconcave \ref{asspt:quasiconc} network game such that $(\agents,\edges)$ is a graph. Then $\game$ admits a pairwise stable network.
	\end{theorem*}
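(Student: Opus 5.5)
The plan is to follow the approach the paper attributes to Bich and Morhaim: exhibit a pairwise stable network as a maximal element of a suitable binary relation on the compact convex set $\networks=[0,1]^\edges$. A maximal-element theorem of Ky Fan / Sonnenschein type then gives existence. It requires a relation $\succ$ on a nonempty compact convex subset of $\mathbb{R}^\edges$ with open lower sections (or open graph) such that $x\notin\operatorname{co}\{y: y\succ x\}$ for every $x$.

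First, I would define $y\succ x$ to hold when there is a coalition $C=\{i,j\}\in\edges$ with the following property. Either $y_C<x_C$ and some $k\in C$ has $\payoff_k(y_C,x_{-C})>\payoff_k(x)$, or $y_C>x_C$ and both $k\in C$ satisfy $\payoff_k(y_C,x_{-C})>\payoff_k(x)$. With this definition, $x$ is pairwise stable exactly when no $y$ satisfies $y\succ x$. The dependence on $y$ runs only through its $C$-coordinate, so I would actually work with the coordinatewise sets
$$P_C(x)=\{t\in[0,1]: (t,x_{-C})\text{ is a profitable deviation of one of the two kinds}\},$$
and set $P(x)=\{y:\exists C,\ y_C\in P_C(x)\}$.

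Second, I would check the two hypotheses of the theorem. Openness of $\{x: y\succ x\}$, or more precisely of the graph, follows from continuity \ref{asspt:cont}: strict inequalities between continuous functions persist under perturbation. The boundary cases $y_C<x_C$ versus $y_C>x_C$ are also strict inequalities, so they are open conditions as well.

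Third comes the convexity condition, which I expect to be the main obstacle. One cannot simply take convex hulls of $P(x)$ across different coordinates, since a convex combination of deviations in different coordinates $C\neq C'$ can land back on $x$. I would avoid this by the standard trick of an auxiliary \emph{link game}: each edge $C$ is a player choosing $x_C\in[0,1]$, and its best-reply correspondence is replaced by a relation that lives entirely in coordinate $C$. Concretely, I would define a relation on the product $\networks=\prod_C[0,1]$ coordinate by coordinate, $x\mapsto P_C(x)\subseteq[0,1]$, and apply the product (generalized-game) version of the maximal element theorem. That version only needs $x_C\notin\operatorname{co}P_C(x)$ for each $C$ separately.

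The remaining work is then one-dimensional. Fix $C$ and $x_{-C}$, and suppose $x_C$ were a convex combination of points of $P_C(x)$. Then there would be $a<x_C<b$ with $a,b\in P_C(x)$: the point $a$ is a strict decrease improving some $k\in C$, and $b$ is a strict increase improving both members of $C$, in particular that same $k$. So $\payoff_k(a,x_{-C})>\payoff_k(x)$ and $\payoff_k(b,x_{-C})>\payoff_k(x)$. Quasiconcavity \ref{asspt:quasiconc} of $\payoff_k(\cdot,x_{-C})$ gives $\payoff_k(x)\ge\min(\payoff_k(a,x_{-C}),\payoff_k(b,x_{-C}))>\payoff_k(x)$, a contradiction. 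The case where both points lie on the same side is impossible, since then the hull excludes $x_C$.

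Finally, the theorem yields $x^*$ with $P_C(x^*)=\emptyset$ for all $C$, which is exactly pairwise stability. The delicate point to verify is that the open-graph requirement and the product formulation are compatible, in the way Bich--Morhaim handle it via a Nash-type fixed point.
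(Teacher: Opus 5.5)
Your proof is correct, but it is organized differently from the paper. The paper quotes this theorem without proof. Its own argument is the unconstrained graph case of Theorem~\ref{thm:hypergraph_stable}, which applies the single-correspondence Yannelis--Prabhakar theorem to $\Psi(x)$. Here $\Psi(x)$ is the set of profitable one-coordinate deviations $(y_C,x_{-C})$, with the other weights frozen at $x_{-C}$.

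The obstruction that pushes you to a product formulation comes from letting $y_{-C}$ be arbitrary. Your $P(x)$ is a union of cylinders, and its convex hull can indeed contain $x$. With $y_{-C}=x_{-C}$ it cannot. If $x=\sum_n\lambda_n(t_n,x_{-C_n})$ with all $\lambda_n>0$ and $t_n\neq x_{C_n}$, then $\sum_{n:C_n=C}\lambda_n(t_n-x_C)=0$ for every $C$. So some coordinate carries deviations on both sides of $x_C$, and your one-dimensional quasiconcavity argument applies verbatim (Step 2 of the appendix).

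The price the paper pays is that $\Psi$ has no open lower sections in general, because its values lie on coordinate lines through $x$, which move with $x$. Lower hemicontinuity therefore has to be checked by hand with sequences (Step 1).

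Your qualitative-game version, with edges as players and $P_C:X\to 2^{[0,1]}$, instead uses the product form of the maximal-element theorem (Gale--Mas-Colell, or Shafer--Sonnenschein with constraint correspondences identically $[0,1]$). In exchange, the open-graph hypothesis is immediate: the graph of $P_C$ is cut out of $X\times[0,1]$ by finitely many strict inequalities between continuous functions of $(x,t)$. So the point you flag as delicate needs no further work and no Nash-type reformulation.

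The paper's single-correspondence route has one advantage: it works directly on the non-product set $X^{\alpha,\beta}$ once budgets are imposed. Your route would need the abstract-economy version with continuous feasibility correspondences to cover that case.
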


    We say that the assumptions \ref{asspt:cont} and \ref{asspt:quasiconc} are minimal because removing either one leads to network games that admit no stable network, as also shown by Bich and Morhaim \cite{bich_morhaim_2020}.

   There are several limitations to the Bich--Morhaim model.
   In many network formation settings, link intensities naturally represent scarce resources that agents must allocate across relationships. Beyond time allocation, budget constraints arise naturally in many strategic network formation environments. In R\&D and innovation networks, links represent joint research projects, and agents face limited research budgets or personnel that must be allocated across collaborations. In financial networks, links may capture exposure through joint investments, risk-sharing agreements, or credit relationships, where capital constraints limit the total intensity of bilateral or multilateral ties. In production and supply-chain networks, firms form partnerships or coalitions to share inputs or technologies, subject to capacity and cost constraints. More generally, coalitions naturally arise when productivity depends on the joint participation of several agents, as in research consortia, syndicate loans, startup founding teams, or collaborative platforms. In such settings, agents must decide not only which coalitions to join, but also how to allocate scarce resources -- capital, effort, or capacity -- across multiple overlapping group interactions.

	\section{Stability in Hypergraphic Networks with Constraints} \label{sec:hyp_net}
	
	Motivated by the above considerations, we extend in this section the classical framework of network formation along two dimensions.
	First, interactions are no longer restricted to bilateral links: weights are assigned to coalitions of agents, allowing us to model multilateral relationships. Second, agents' choices are subject to constraints, reflecting limitations on the total weight they can allocate across coalitions, as well as upper bounds on the weight that each coalition may receive.
	
	Within this framework, we introduce a notion of stability that naturally extends pairwise stability to weighted hypergraphic networks.
	The core intuition of pairwise stability is preserved: a unilateral decrease in a coalition's weight is feasible whenever at least one member benefits from it, whereas a joint increase can occur only if all members agree to it and the resulting allocation remains compatible with their individual constraints.
	This approach allows us to explicitly account for constrained agents in the analysis of network formation.
	
	For any set $\agents$ of agents, any set $\edges$ of edges, any family $\capacity = (\capacity_\coalition)_{\coalition \in \edges}$ of real numbers in $[0,1]$ (called a \emph{capacity vector})\footnote{The bound is arbitrarily chosen to be 1 without loss of generality, it could be changed to any positive real number without affecting any of the results presented in this paper.}, and any family $\budget = (\budget_i)_{i \in \agents}$ of positive real numbers (called a \emph{budget vector}), define
	$$\networksab = \{\network \in \prod_{\coalition \in \edges} [0,\capacity_\coalition] : \forall i \in \agents, \sum_{\coalition \ni i} \network_\coalition \leq \budget_i\}.$$

    \begin{definition}[Constrained network game]
        A \emph{constrained network game} is a tuple
        $$\game = (\agents,\edges,\payoff,\capacity,\budget),$$
		where $(\agents,\edges,\payoff)$ is a network game, $\capacity$ is a capacity vector, and $\beta$ is a budget vector.
    \end{definition}

	Notice that an (unconstrained) network game is a special case of a constrained network game with each $\alpha_\coalition$ being equal to one, and each $\beta_i$ equals to $+ \infty$. From now on, the phrase "network game" will refer to general constrained network games unless otherwise specified.
    
    We now introduce our notion of stability, adapted to a framework in which each coalition is assigned a weight and agents' choices are subject to constraints.
	
	\begin{definition}[Stable network]
		Let $\game = (\agents,\edges,\payoff, \capacity, \budget)$ be a network game.
		A network $\network \in \networks$ is \emph{stable with respect to $\game$} if for every coalition $\coalition \in \edges$ and every weight $\networkbis_\coalition \in [0,\capacity_\coalition]$, the following three conditions hold:
		\begin{enumerate}
			\item $\network \in \networksab$.
			\item If $\networkbis_\coalition \leq \network_\coalition$, then, for every $i \in \coalition$, $\payoff_i(\networkbis_\coalition,\network_{-\coalition}) \leq \payoff_i(\network)$.
			\item If $\network_\coalition \leq \networkbis_\coalition$, and if, for every $i \in \coalition$, $\payoff_i(\networkbis_\coalition,\network_{-\coalition}) > \payoff_i(\network)$, then there exists $j \in \coalition$ such that
			$$\sum_{\substack{\coalition' \neq \coalition \\\coalition' \ni j}} \network_{\coalition'} + \networkbis_\coalition > \budget_j.$$
		\end{enumerate}
	\end{definition}
	
	This notion is in the spirit of the original concept of pairwise stability introduced by Bich and Morhaim.
	However, the bilateral improvement condition is modified as follows: whenever the members of a coalition would benefit from increasing the coalition’s level of connection, there must exist at least one agent whose constraint would be violated by such an increase.
	
	We now present an existence theorem for stable networks under standard continuity and quasiconcavity assumptions.
	
	\begin{theorem} \label{thm:hypergraph_stable}
		Let $\game = (\agents,\edges,\payoff, \capacity, \budget)$ be a continuous \ref{asspt:cont} and quasiconcave \ref{asspt:quasiconc} network game. Then $\game$ admits a stable network.
	\end{theorem}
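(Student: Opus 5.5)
We prove the theorem by constructing a binary relation on the compact convex set $\networksab$ and extracting a maximal element from it, following the approach of Bich and Morhaim. Concretely, for $\network, \networkbis \in \networksab$ we declare $\networkbis \succ \network$ if there is a coalition $\coalition \in \edges$ such that $\networkbis_{-\coalition} = \network_{-\coalition}$ and one of two things holds. Either (a) $\networkbis_\coalition < \network_\coalition$ and some $i \in \coalition$ has $\payoff_i(\networkbis) > \payoff_i(\network)$, or (b) $\networkbis_\coalition > \network_\coalition$ and every $i \in \coalition$ has $\payoff_i(\networkbis) > \payoff_i(\network)$. Since $\networkbis \in \networksab$, case (b) automatically respects every member's budget and the capacity $\capacity_\coalition$.

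The first step is to check that a maximal element is stable. Condition 1 holds because the element lies in $\networksab$. Condition 2 holds because any decreased network $(\networkbis_\coalition,\network_{-\coalition})$ stays in $\networksab$, so a violation would be a deviation of type (a). For condition 3, suppose $\networkbis_\coalition \ge \network_\coalition$, all members strictly gain, and the budget inequality $\sum_{\coalition' \ni j, \coalition' \neq \coalition} \network_{\coalition'} + \networkbis_\coalition \le \budget_j$ holds for all $j \in \coalition$. Budgets of agents outside $\coalition$ are unaffected, and $\networkbis_\coalition \le \capacity_\coalition$, so the deviation lies in $\networksab$. It is then of type (b), which is a contradiction.

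The second step is to obtain a maximal element. Here I will work with the lower sections $P^{-1}(\networkbis)=\{\network : \networkbis \succ \network\}$, where the relation is generally neither open-graphed nor convex-valued. The standard route is to apply a KKM/Fan–Browder-type theorem for a suitably modified relation. The alternative route, likely the one used by Bich and Morhaim, reduces to a fixed point argument: each coalition $\coalition$ acts as a player choosing $\network_\coalition$ in the interval $I_\coalition(\network_{-\coalition}) = [0, \min(\capacity_\coalition, \min_{j\in\coalition}(\budget_j - \sum_{\coalition'\ni j,\coalition'\neq\coalition}\network_{\coalition'}))]$. This correspondence is continuous with convex compact values, and it is nonempty whenever $\network\in\networksab$.

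I would take the fixed point route and define a correspondence $F_\coalition(\network)$ of ``acceptable'' weights. Let $\underline{t}$ be the smallest weight in $I_\coalition$ at which no member wants to decrease further, i.e.\ the least $t$ such that every $i\in\coalition$ weakly prefers $t$ to all lower values. This set is closed by continuity; it contains the top endpoint or is handled by a compactness argument; and by componentwise quasiconcavity it is an upper interval. Then take $F_\coalition(\network)$ to be the set of $t$ in this set at which no upward move within $I_\coalition$ makes all members strictly better off. Quasiconcavity of each $\payoff_i(\cdot,\network_{-\coalition})$ makes ``$i$ does not gain by decreasing'' a superlevel-type condition, so the points where no member gains by decreasing form an interval $[a,\max I_\coalition]$. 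The points from which no unanimous increase exists form a closed set containing $\max I_\coalition$. The difficulty is that their intersection need not be convex and need not vary upper-hemicontinuously.

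\textbf{Main obstacle.} This is the step I expect to be hardest: securing nonemptiness, convexity and closed graph of $F_\coalition$. To get around it, I would smooth by the standard trick of replacing the relation with its ``strict for all / strict for some'' version and taking closures. Alternatively, I would apply the Bich–Morhaim existence theorem itself, since their proof handles exactly this one-dimensional coalition-wise structure, with the interval $[0,1]$ replaced by the continuous moving interval $I_\coalition(\network_{-\coalition})$. Their argument uses only component-wise quasiconcavity, continuity, and the fact that each coordinate varies in an interval depending continuously on the others. Hypergraph coalitions instead of pairs change nothing, because the ``some member'' versus ``all members'' quantifiers are the same. I would therefore verify that their fixed point argument extends to continuous interval constraints, using a generalized game (Debreu–Arrow) existence theorem, with the constraint correspondence $\network \mapsto \prod_\coalition I_\coalition(\network_{-\coalition})$ being continuous on $\networksab$.
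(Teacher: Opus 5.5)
Your first step is correct and matches how the paper concludes: a network in $\networksab$ admitting no $\succ$-improvement satisfies the three stability conditions.

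The gap is the second step, which carries the entire content of the theorem and which you never prove. You set up a fixed-point route and call nonemptiness, convexity and closed graph of $F_C$ the main obstacle. You then only list possible workarounds: an unspecified ``smoothing'' of the relation, or an unchecked extension of the Bich--Morhaim argument through a Debreu--Arrow generalized-game theorem. Some intermediate claims are also wrong or incomplete:
\begin{itemize}
\item By component-wise quasiconcavity, the set of $t\in I_C$ from which no member gains by decreasing is $[0,\min_{k\in C} b_k(x)]$, where $b_k(x)$ is the largest maximizer of $u_k(\cdot,x_{-C})$ on $I_C$. It is a lower interval, not $[a,\max I_C]$.
\item $\prod_C I_C(x_{-C})$ is not contained in $\networksab$, since simultaneous moves can break a budget. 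Any Kakutani-type argument therefore has to run on $\prod_C[0,\alpha_C]$. There $I_C$ can be empty unless you truncate its upper endpoint at $0$, and you must then check that the fixed point is feasible.
\end{itemize}

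The missing idea is that you need neither open lower sections nor convexity of the relation. The paper applies the Yannelis--Prabhakar maximal-element theorem to the upper-section correspondence $\Psi(x)=\{y\in\networksab : y\succ x\}$. That theorem only requires $\Psi$ to be lower hemicontinuous with $x\notin\mathrm{Conv}(\Psi(x))$.
\begin{itemize}
\item Lower hemicontinuity follows from continuity of the payoffs. For unanimous increases along $x^n\to x$, you also cap the target weight at $\min\{y_C,\min_i(\beta_i-\sum_{C'\ni i,\,C'\neq C}x^n_{C'})\}$, the inner minimum taken over members whose budget binds at $y_C$.
\item The convex-hull condition holds because each element of $\Psi(x)$ differs from $x$ in a single coordinate. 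A convex combination equal to $x$ must therefore use, in some coordinate $C$, both a decrease strictly preferred by some $i\in C$ and an increase strictly preferred by all members. This contradicts quasiconcavity of $u_i(\cdot,x_{-C})$.
\end{itemize}

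Your own route can also be completed. Quasiconcavity makes the set of $t$ admitting no unanimous strict increase a closed upper set, so your $F_C(x)$ is the nonempty interval $[c(x),\min_k b_k(x)]$. Continuity plus Berge's theorem then give a closed graph. None of this is in your proposal, though.
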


    \begin{proof}[Sketch of proof.]
    The proof follows the spirit of Theorem 6.1 in Bich--Morhaim \cite{bich_morhaim_2017}.
    Namely, the idea is to apply a result due to Yannelis and Prabhakar -- see Theorem 5.2 in \cite{yannelis_prabhakar_1983} -- to the correspondence $\Psi: \networksab \to 2^{\networksab}$ defined as follows.
	For any $\network,\network' \in \networksab$, we have $\network' \in \Psi(\network)$ if and only if there exists a coalition $\coalition \in \edges$ such that one of the following conditions holds:
	\begin{enumerate}
		\item There exists a weight $\networkbis_\coalition \in [0,\network_\coalition[$ such that $\network' = (\networkbis_\coalition,\network_{-\coalition})$ and
		$$\exists i \in \coalition, \payoff_i(\network') > \payoff_i(\network).$$
		\item There exists a weight $\networkbis_\coalition \in ]\network_\coalition,\capacity_\coalition]$ such that $\network' = (\networkbis_\coalition,\network_{-\coalition})$ and
        $$[\forall i \in \coalition, \payoff_i(\network') > \payoff_i(\network)] \wedge [\forall i \in \coalition, \sum_{\substack{\coalition' \neq \coalition\\\coalition' \ni i}} \network_{\coalition'} + \networkbis_\coalition \leq \budget_i].$$
	\end{enumerate}
	The full proof is provided in Appendix \ref{app.proof:hypergraph_stable}.
    \end{proof}

	\section{Full Stability: Definition and Counter-examples} \label{sec:full_def}

    Now that constraints are imposed, it is however no longer obvious that pairwise stability as currently defined remains the most appropriate concept to capture network stability. To see this, consider the following payoff functions:
    $$\begin{aligned}
        \payoff_1(\network) &= \network_{\{1,2\}} + 2 \network_{\{1,3\}}, \\
        \payoff_2(\network) &= 2 \network_{\{1,2\}} + \network_{\{2,3\}}, \\
        \payoff_3(\network) &= 2 \network_{\{1,3\}} + \network_{\{2,3\}},
    \end{aligned}$$
    together with identical budget constraints $\budget_1 = \budget_2 = \budget_3 = 1$.
    In this setting, the network $x = (\network_{\{1, 2\}}, \network_{\{1, 3\}}, \network_{\{2, 3\}}) = (1,0,0)$ is stable.
    Similarly, the network $\network' = (0,1,0)$ is also stable.
    Yet these two networks lead to very different outcomes: the payoff vector associated with $\network$ is $(1,2,0)$, whereas the payoff vector associated with $\network'$ is $(2,0,2)$.
    Agents 1 and 3 are strictly better off under $\network'$ than under $\network$, but they cannot transition from $\network$ to $\network'$ through the deviations allowed by classical pairwise stability. This example is illustrated in \Cref{fig:ex_stable}.

    \begin{figure}[ht]
 \centering
\begin{tikzpicture}
\begin{scope}[every node/.style={thick}]
    \node (u1) at (0,-0.5) {$u_1 = 1$};
    \node (u2) at (2,-0.5) {$u_2 = 2$};
    \node (u3) at (1,2) {$u_3 = 0$};
    \node (x12) at (1,-0.2) {$1$};
    \node (x23) at (1.7,0.75) {$0$};
    \node (x13) at (0.3,0.75) {$0$};
\end{scope}

\begin{scope}[every node/.style={circle,thick,draw}]
    \node (1) at (0,0) {$1$};
    \node (2) at (2,0) {$2$};
    \node (3) at (1,1.5) {$3$};
\end{scope}

\begin{scope}[>={Stealth[black]},
              every node/.style={fill=white,circle},
              every edge/.style={draw=black, semithick}]
    \path [-] (1) edge (2);
    \path [-] (1) edge (3);
    \path [-] (3) edge (2);
\end{scope}
\end{tikzpicture}\hspace{2 cm}
\begin{tikzpicture}
\begin{scope}[every node/.style={thick}]
    \node (u1) at (0,-0.5) {$u_1 = 2$};
    \node (u2) at (2,-0.5) {$u_2 = 0$};
    \node (u3) at (1,2) {$u_3 = 2$};
    \node (x12) at (1,-0.2) {$0$};
    \node (x23) at (1.7,0.75) {$0$};
    \node (x13) at (0.3,0.75) {$1$};
\end{scope}

\begin{scope}[every node/.style={circle,thick,draw}]
    \node (1) at (0,0) {$1$};
    \node (2) at (2,0) {$2$};
    \node (3) at (1,1.5) {$3$};
\end{scope}

\begin{scope}[>={Stealth[black]},
              every node/.style={fill=white,circle},
              every edge/.style={draw=black, semithick}]
    \path [-] (1) edge (2);
    \path [-] (1) edge (3);
    \path [-] (3) edge (2);
\end{scope}
\end{tikzpicture}
\caption{Illustration of the limitation of the notion of stability in the presence of constraints. Left: network $\network = (1, 0, 0)$, right: network $\network' = (0, 1, 0)$. Both agents $1$ and $3$ are better off under $\network'$,  nonetheless $\network$ is stable according to the classical definition of pairwise stability.}
\label{fig:ex_stable}
\end{figure}
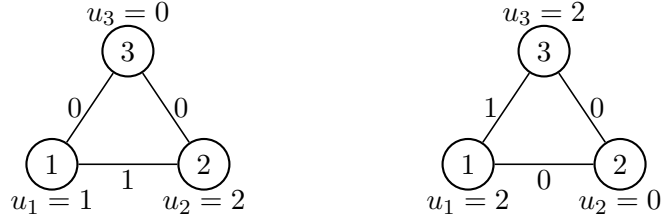
    
    To address this limitation, we introduce a stronger concept of stability called \emph{full stability}.
	This notion refines the previous one by allowing agents, when jointly increasing the weight of a coalition, to simultaneously decrease the weights of other coalitions to which they belong. This flexibility enables agents to reallocate weights in order to satisfy their individual constraints. As will be explained later, this notion also shares similarities with stability concepts in matching theory, thereby building a bridge between the two literatures.
    
	We now introduce some notation that will be used throughout the paper.
	Let $\agents$ be a set of agents, $\edges$ be a set of edges:
	\begin{itemize}
		\item Let $i \in \agents$ be an agent.
		Define
		$$\edges^i = \{\coalition \in \edges : \coalition \ni i\}, \ \edges^{-i} = \edges \backslash \edges^i, \ \networks^i = [0,1]^{\edges^i} \text{ and } \networks^{-i} = [0,1]^{\edges^{-i}}.$$
		For any $\networkbis^i \in \networks^i$, and any $\network^{-i} \in \networks^{-i}$, $(\networkbis^i,\network^{-i}) \in \networks$ is the network defined by $(\networkbis^i,\network^{-i})_\coalition = (\network^{-i})_\coalition$, for any $\coalition \in \edges^{-i}$, and $(\networkbis^i,\network^{-i})_\coalition = \networkbis^i_\coalition$, for any $\coalition \in \edges^i$.
		Last, for any $\network \in \networks$, $\network^{-i} \in \networks^{-i}$ corresponds to the tuple of weights in $\network$ where the weights associated to $i$ are missing.
		\item Let $\coalition \in \edges$ be a coalition.
		Define
		$$\edges^\coalition = \bigcup_{i \in \coalition} \edges^i, \ \edges^{-\coalition} = \edges \backslash \edges^\coalition, \ \networks^\coalition = [0,1]^{\edges^\coalition} \text{ and } \networks^{-\coalition} = [0,1]^{\edges^{-\coalition}}.$$
		For any $\networkbis^\coalition \in \networks^\coalition$, and any $\network^{-\coalition} \in \networks^{-\coalition}$, $(\networkbis^\coalition,\network^{-\coalition}) \in \networks$ is the network defined by $(\networkbis^\coalition,\network^{-\coalition})_{\coalition'} = (\network^{-\coalition})_{\coalition'}$, for any $\coalition' \in \edges^{-\coalition}$, and $(\networkbis^\coalition,\network^{-\coalition})_{\coalition'} = \networkbis^\coalition_{\coalition'}$, for any $\coalition' \in \edges^\coalition$.
		Last, for any network $\network \in \networks$, $\network^{-\coalition} \in \networks^{-\coalition}$ corresponds to the tuple of weights in $\network$ where the weights associated to members of $\coalition$ are missing.
	\end{itemize}
	
	We now introduce the notion of full stability within the framework of weighted network formation, incorporating, as in the previous concept, capacity constraints and budget constraints.
	
	\begin{definition}[Fully stable network] \label{def:full_stab}
		Let $\game = (\agents,\edges,\payoff, \capacity, \budget)$ be a network game.
		A network $\network \in \networks$ is \emph{fully stable with respect to $\game$} if the following three conditions hold:
		\begin{enumerate}
			\item $\network \in \networksab$.
			\item For every agent $i \in \agents$, and every $\networkbis^i \in \networks^i$, if, for every $\coalition \in \edges^i$, $\networkbis^i_\coalition \leq \network_\coalition$, then $\payoff_i(\networkbis^i,\network^{-i}) \leq \payoff_i(\network)$.
			\item For every coalition $\coalition \in \edges$, and every $\networkbis^\coalition \in \networks^\coalition$, if:
			\begin{enumerate}
				\item $\network_\coalition \leq \networkbis^\coalition_\coalition$;
				\item for every $\coalition' \in \edges^\coalition \backslash \{\coalition\}$, $\networkbis^\coalition_{\coalition'} \leq \network_{\coalition'}$;
				\item for every $i \in \coalition$, $\payoff_i(\networkbis^\coalition,\network^{-\coalition}) > \payoff_i(\network)$,
			\end{enumerate}
			then there exists $j \in \coalition$ such that
			$$\sum_{\coalition' \in \edges^j} \networkbis^\coalition_{\coalition'} > \budget_j.$$
		\end{enumerate}
	\end{definition}

The last (new) condition means that if a coalition can increase the payoffs of all its members by increasing their common weight -- with members possibly reducing the weights of some of their other incident edges simultaneously -- then at least one agent in the coalition must be violating its budget constraint.

 Finally, let us introduce some assumptions on payoff functions that will be needed for some existence and non-existence results.
    
    \begin{assumption}{S}{asspt:sep}
        A network game is called \emph{separable} if its payoff function $\payoff$ is separable, i.e., it can be written $\payoff = \sum_{\coalition \in \edges} \payoff_\coalition$, with $\payoff_\coalition: [0, \capacity_\coalition] \to \mathbb{R}^\agents$ for all $\coalition\in\edges$.
    \end{assumption}

    \begin{assumption}{C}{asspt:conc}
        A network game is called \emph{concave} if each agent $i$'s payoff function $\payoff_i$ is component-wise concave\footnote{This assumption is weaker that global concavity. Notice also that \ref{asspt:conc} implies \ref{asspt:quasiconc}.}.
    \end{assumption}

    \begin{assumption}{L}{asspt:lin}
        A network game is called \emph{linear} if each agent $i$'s payoff function $\payoff_i$ is linear\footnote{Notice that assumption \ref{asspt:lin} implies \ref{asspt:cont}, \ref{asspt:sep} and \ref{asspt:conc}.}.
    \end{assumption}

     \begin{assumption}{NE}{asspt:no_ext}
        A network game is said to have \emph{no externalities} if each agent $i$'s payoff function $\payoff_i$ verifies: $\forall \network \in \networksab$, $\forall \networkbis^{-i}\in \edges^{-i}$, $u_i(\network) = u_i(\network^i, \networkbis^{-i})$.
    \end{assumption}

    \begin{assumption}{PE}{asspt:pos_ext}
        A network game is said to have \emph{positive externalities} if each agent $i$'s payoff function $\payoff_i$ verifies: $\forall \network \in \networksab$, $\forall \networkbis^{-i}\in \edges^{-i}$ such that $\network^{-i} \succeq \networkbis^{-i}$ (i.e., $\network^{-i}$ is larger on all components), $u_i(\network) \geq u_i(\network^i, \networkbis^{-i})$.
    \end{assumption}

    \begin{assumption}{B}{asspt:bip}
        A network game is called \emph{bipartite} if $(\agents, \edges)$ forms a bipartite graph, i.e., if $\exists U, V \subset \agents$, $U\cap V = \emptyset$, such that $\agents = U \sqcup V$ and $\forall C\in\edges$, $C = \{i, j\}$ with $i\in U$ and $j\in V$.
    \end{assumption}
	
	\subsection{Link with Matching Theory}
	
	In this subsection, we express in network terms the notions of stability defined in matching theory, and the associated existence results. The definitions and results presented here are not used in the rest of the paper but they establish the link between our framework and fractional matching theory. While most of matching literature has been developed on graphs, it naturally extends to hypergraphs, and the definitions given below do not need $(\agents, \edges)$ to be a graph. 

    \begin{definition}[Matching problem]
        A \emph{fractional matching problem} is a network game that is linear \ref{asspt:lin} and has no externalities \ref{asspt:no_ext}.
    \end{definition}

    Preferences in matching problems are ordinal, i.e., each agent ranks their potential partners from best to worst. In our network formulation, it is equivalent to say that their payoff functions are linear (the ranking being induced by the coefficient of each variable), and without externalities since in matching only an agent's own connections matter to them.

    Many versions of matching problems have been given specific names. When the underlying graph is bipartite and all budgets are equal to 1 it is called the marriage problem, or one-to-one matching, and when one side has budgets of 1 and the other side's budgets are integers greater or equal to one it is called the college admission problem, or many-to-one matching. When budgets and edge capacities can be completely arbitrary, it is called ordinal transport \citep{Baiou_02}, and when the graph is not bipartite it is called the roommate problem. We define stability for those problems, by extending to our framework the definition proposed by Aharoni and Fleiner \cite{Aharoni_2003}.

    \begin{definition}[Stable matching]
    Let $\game = (\agents,\edges,\payoff, \capacity, \budget)$ be a fractional matching problem with $\payoff(\network) = (\sum_{\coalition \in \edges^i}A^i_\coalition \network_\coalition)_{i \in \agents}$. A network $\network \in \networksab$ is a \emph{stable matching} if for any edge $\coalition \in\edges$, 
    \begin{enumerate}
        \item if $\exists i \in \coalition$ such that $A^i_\coalition < 0$, $\network_\coalition = 0$, and
        \item if $\forall i \in \coalition$, $A^i_\coalition > 0$, $\exists i \in \coalition$ such that $\sum\limits_{\coalition' \ni i, A^i_{\coalition'} \geq A^i_\coalition} \network_{\coalition'} = \budget_i$.
    \end{enumerate}
    \end{definition}

    When $(\agents, \edges)$ is a hypergraph, this definition is slightly stronger than full stability, but both notions coincide on graphs.

    \begin{proposition} \label{lemma:stable_matching}
        Let $\game = (\agents,\edges,\payoff, \capacity, \budget)$ be a fractional matching problem. If $\network\in \networksab$ is a stable matching, then it is a fully stable network. If $\forall \coalition \in \edges$, $\vert \coalition \vert = 2$, the converse is also true.
    \end{proposition}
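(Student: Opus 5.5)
The plan is to write every payoff explicitly as $\payoff_i(\network)=\sum_{\coalition\in\edges^i}A^i_\coalition\network_\coalition$. Because there are no externalities and payoffs are linear, any deviation $\networkbis^\coalition$ changes agent $i$'s payoff by exactly
$$\Delta_i=\sum_{\coalition'\in\edges^i}A^i_{\coalition'}(\networkbis_{\coalition'}-\network_{\coalition'}).$$
Both directions then reduce to sign bookkeeping on this sum. I will split agent $i$'s edges into three classes: edges with $A^i_{\coalition'}<0$, edges with $0\le A^i_{\coalition'}<A^i_\coalition$, and the set $H_i(\coalition)=\{\coalition'\ni i: A^i_{\coalition'}\ge A^i_\coalition\}$, which contains $\coalition$.

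\emph{Stable matching $\Rightarrow$ fully stable.} Condition 1 is immediate. For condition 2, take a unilateral decrease by agent $i$. Edges with negative coefficient already have weight $0$ by the first stability clause, so they cannot be decreased. Every other decreased edge contributes a term $\le 0$, hence $\Delta_i\le 0$.

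For condition 3, fix a deviation $\networkbis^\coalition$ satisfying (a) and (b), assume every member of $\coalition$ strictly gains, and suppose for contradiction that the budgets are respected. First, if some $i\in\coalition$ has $A^i_\coalition\le 0$, the same argument as for condition 2 gives $\Delta_i\le 0$, since raising $\network_\coalition$ cannot help $i$. Hence all $A^i_\coalition>0$, and the second stability clause yields an agent $i\in\coalition$ with $\sum_{H_i(\coalition)}\network_{\coalition'}=\budget_i$.

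Feasibility of $\networkbis$ for $i$ then gives $\sum_{H_i(\coalition)}\networkbis_{\coalition'}\le\budget_i$. Therefore
$$\networkbis_\coalition-\network_\coalition\le\sum_{H_i(\coalition)\setminus\{\coalition\}}(\network_{\coalition'}-\networkbis_{\coalition'}).$$
Now I bound $\Delta_i$ term by term:
\begin{itemize}
\item the lost weight on $H_i(\coalition)\setminus\{\coalition\}$ costs at least $A^i_\coalition$ per unit, which by the display above is at least the gain $A^i_\coalition(\networkbis_\coalition-\network_\coalition)$ on $\coalition$;
\item decreases on edges with $0\le A^i_{\coalition'}<A^i_\coalition$ cost a nonnegative amount;
\item edges with negative coefficient are at $0$ and stay at $0$.
\end{itemize}
So $\Delta_i\le 0$, a contradiction. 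This direction uses nothing about $|\coalition|$.

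\emph{Converse for graphs.} Assume $\network$ is fully stable. For clause 1, if $A^i_\coalition<0$ and $\network_\coalition>0$, then $i$ strictly gains by cutting $\network_\coalition$ to $0$, which contradicts condition 2. For clause 2, take $\coalition=\{i,j\}$ with both coefficients positive, and suppose $\sum_{H_k(\coalition)}\network<\budget_k$ for both $k\in\{i,j\}$. I construct a violating deviation. Raise $\network_\coalition$ by a small $\varepsilon>0$. Each $k\in\{i,j\}$ absorbs the extra $\varepsilon$ first from its budget slack, and then by decreasing weight on its edges outside $H_k(\coalition)$. The strict inequality guarantees there is enough total room, because total weight is at most $\budget_k$ while the weight in $H_k(\coalition)$ is strictly less.

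Here the graph assumption is essential. Every edge $\{k,l\}\ne\coalition$ that $k$ decreases has $l\notin\coalition$, so $k$'s cuts do not touch the other member's payoff. In a hypergraph such a cut could hit an edge also containing the other member and hurt them. Each $k$ then gains at least
$$A^k_\coalition\varepsilon-\max_{\coalition'\notin H_k(\coalition)}A^k_{\coalition'}\cdot\varepsilon>0,$$
with budgets respected, which contradicts condition 3.

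The main obstacle I expect is the boundary case in the converse where $\network_\coalition$ cannot be raised at all. The definition of full stability only allows $\networkbis^\coalition\in[0,1]^{\edges^\coalition}$ and checks budgets, not capacities, so a raise is available whenever $\network_\coalition<1$. The problem is the case $\network_\coalition=1$ with slack budgets. There I will argue that $\network_\coalition=1$ itself forces the tight-set equality, presumably via the normalisation $\budget$ versus capacity bound $1$. If that fails, I will adopt the standard convention that a saturated edge counts as blocked.
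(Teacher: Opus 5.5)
Your forward direction is correct and complete. For the converse you follow the paper's argument: an $\varepsilon$-increase of $x_C$ paid for by cuts on edges ranked strictly below $C$, with the graph assumption guaranteeing that one member's cuts leave the other's payoff untouched. You carry it out more carefully than the paper does. You treat the case where some member has $A^i_C\le 0$, which clause~2 of the stable-matching definition does not cover and the paper skips. You also let each member first absorb the increase from budget slack. The paper instead asserts that there exist edges $\{i,k\},\{j,\ell\}$ ranked below $C$ with positive weight, which is false whenever a member's total budget is slack.

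The boundary case you flag cannot be closed, and your first proposed fix would fail. Only capacities are normalised to $[0,1]$; budgets are arbitrary positive reals, so $x_C=1$ does not force a tight set. A counterexample: take $\mathcal{N}=\{1,2\}$, $\mathcal{E}=\{\{1,2\}\}$, $\alpha_{\{1,2\}}=1$, $\beta=(2,2)$, and $u_1=u_2=x_{\{1,2\}}$.
\begin{itemize}
\item The network $x_{\{1,2\}}=1$ is fully stable: no decrease helps, and no $y\in[0,1]$ raises the weight.
\item Yet for both agents $\sum_{H_i(C)}x=1<2=\beta_i$, so it is not a stable matching.
\item In fact this bipartite game has no stable matching at all under the paper's definition. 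That definition has dropped the capacity clause of Ba\"{\i}ou--Balinski.
\end{itemize}
So the converse is false as literally stated, and the paper's proof silently misses exactly this case.

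Your fallback is the right repair, but it changes the definitions, and the change must be consistent with full stability. If you add the usual clause ``or $x_C=\alpha_C$'' to stable matching, you must also restrict deviations in full stability to $y_C\le\alpha_C$. Otherwise the literal definition, which only imposes $y\in[0,1]^{\mathcal{E}^C}$, breaks the forward direction. For instance, with $\alpha_C=1/2$, $\beta=(1,1)$ and $u_1=u_2=x_C$, the network $x_C=1/2$ would be a stable matching, but raising to $1$ blocks it.

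With the literal full-stability definition, two consistent repairs are available:
\begin{itemize}
\item add the clause ``or $x_C=1$'' to stable matching; or
\item assume $\beta_i\le 1$ for all $i$. Then $x_C=1$ forces $\beta_k=1$ and a tight set for each member, so your proof goes through unchanged.
\end{itemize}
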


    The proof is provided in Appendix \ref{app.proof:stable_matching}.

    Many existence results have been proposed for different versions on the stable fractional matching problem, we recall here the two most general ones.

    \begin{theorem*}
        Let $\game = (\agents,\edges,\payoff, \capacity, \budget)$ be a fractional matching problem, i.e., a linear \ref{asspt:lin} network game with no externalities \ref{asspt:no_ext}.
        \begin{enumerate}
            \item (Baïou--Balinski \cite{Baiou_02}). If $(\agents, \edges)$ is a bipartite graph, then $\game$ admits a stable matching (and thus a fully stable network).
            \item (Aharoni--Fleiner \cite{Aharoni_2003}). If $\budget = (1, \dots , 1)$, then $\game$ admits a stable matching (and thus a fully stable network).
      \end{enumerate}
    \end{theorem*}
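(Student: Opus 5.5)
The statement collects two results from the literature. My plan is therefore not to reprove them from scratch. Instead I would: (i) show that each of our fractional matching problems can be turned into an instance of the cited model; (ii) check that a stable matching in their sense satisfies conditions 1 and 2 of our definition of stable matching; and (iii) obtain the parenthetical ``and thus a fully stable network'' directly from \Cref{lemma:stable_matching}, whose first implication holds on arbitrary hypergraphs.

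\textbf{Reduction to strict, acceptable preferences.} Write $\payoff_i(\network)=\sum_{\coalition\in\edges^i}A^i_\coalition\network_\coalition$.
\begin{enumerate}
\item Every edge $\coalition$ for which some member has $A^i_\coalition\le 0$ is deleted, i.e.\ its weight is frozen at $0$. Condition 1 then holds automatically. Condition 2 only concerns edges with all coefficients strictly positive, so deleting these edges imposes no further requirement.
\item On the remaining edges, each agent $i$ has a weak order given by the coefficients $A^i_\coalition$. I would refine it into a strict order by breaking ties arbitrarily.
\end{enumerate}
Now let $\network$ be stable for the strict instance. For any remaining edge $\coalition$, there is a member $i$ whose budget is saturated by edges that $i$ strictly prefers to $\coalition$, or that equal $\coalition$. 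Every such edge $\coalition'$ satisfies $A^i_{\coalition'}\ge A^i_\coalition$. Hence the sum in condition 2, taken over the weakly larger set $\{\coalition'\ni i : A^i_{\coalition'}\ge A^i_\coalition\}$, is still at least $\budget_i$. Since $\network\in\networksab$, that sum is also at most $\budget_i$, so it equals $\budget_i$. Thus $\network$ is a stable matching in our sense.

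\textbf{Invoking the two theorems.}
\begin{enumerate}
\item In the bipartite case, the strict instance is exactly an ordinal transport problem in the sense of \citet{Baiou_02}, with arbitrary capacities $\capacity_\coalition$ and budgets $\budget_i$. Their existence theorem yields a stable allocation.
\item For unit budgets on a hypergraph, \citet{Aharoni_2003} obtain a fractional stable matching via Scarf's lemma. The relevant matrix has one row per agent and one column per edge, and each row is ordered by that agent's strict preference. The dominating vertex of the fractional matching polytope is precisely a matching in which every edge is dominated by some member whose constraint is tight on weakly better edges. This is our condition 2.
\end{enumerate}

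\textbf{Main obstacle: capacities.} The step I expect to be delicate is capacities in part 2. Our definition of stable matching requires a budget to be saturated and does not allow a saturated capacity as an excuse. If $\capacity_\coalition<1$, existence can fail. For example, take a single agent, a single edge $\{1\}$ with $\capacity=1/2$ and $A>0$: the budget $\budget_1=1$ can never be saturated. So part 2 must be read, as in Aharoni--Fleiner, with $\capacity_\coalition=1$, so that capacity constraints are implied by the unit budgets. I would make this explicit.

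With Scarf's lemma and the Baïou--Balinski theorem taken as given, the remaining work is the bookkeeping above followed by an application of \Cref{lemma:stable_matching}.
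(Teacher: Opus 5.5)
Your route is essentially the paper's. The paper gives no argument beyond citing \citet{Baiou_02} and \citet{Aharoni_2003}, and it gets the parenthetical from the first implication of \Cref{lemma:stable_matching}. Your preprocessing is sound: you freeze edges with a nonpositive coefficient at $0$, refine ties into strict orders, and pass from saturation on weakly preferred edges in the strict order to condition~2. Your capacity objection to part~2 is also correct.

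The gap is that this objection applies verbatim to part~1, where you assert the opposite. With arbitrary edge capacities, Baïou--Balinski stability only constrains edges with $x_C<\alpha_C$. An edge at capacity is never blocking, whether or not a member's budget is saturated, and it must be so, since otherwise their theorem would fail on the example below. So your sentence ``for any remaining edge $C$, there is a member $i$ whose budget is saturated\dots'' is not what their theorem delivers. Concretely, take $U=\{1\}$, $V=\{2\}$, a single edge $\{1,2\}$, $\alpha_{\{1,2\}}=1/2$, $\beta=(1,1)$ and $u_1=u_2=x_{\{1,2\}}$. Then:
\begin{itemize}
\item the unique Baïou--Balinski stable allocation is $x_{\{1,2\}}=1/2$, which saturates no budget;
\item no feasible network satisfies condition~2 of the paper's definition;
\item under \Cref{def:full_stab} read literally (deviations range over $[0,1]^{\mathcal{E}^C}$), no network is fully stable either, since $y_{\{1,2\}}=1$ is a mutually profitable, budget-feasible deviation.
\end{itemize}
The same failure occurs with $\alpha_C=1$ as soon as budgets exceed $1$, e.g.\ $\beta=(2,2)$.

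So part~1, stated with the paper's definition, needs an extra hypothesis, and your argument must use it. A convenient one is $\alpha_C\ge\min_{i\in C}\beta_i$ for every $C$. If $x_C=\alpha_C$, feasibility forces $x_C=\beta_{i_0}$ for a minimizing $i_0$, so $i_0$ is saturated by $C$ itself and condition~2 holds. Edges with $x_C<\alpha_C$ are covered by Baïou--Balinski stability together with your tie-breaking argument. Your choice $\alpha_C=1$ in part~2 is exactly this hypothesis for unit budgets.

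The alternative is to add the exemption ``or $x_C=\alpha_C$'' to condition~2 and to restrict the deviations in \Cref{def:full_stab} to $\prod_{C}[0,\alpha_C]$. You should then recheck the first implication of \Cref{lemma:stable_matching}. It still holds: at a capacity-saturated edge, the only admissible deviations lower other weights. These cannot strictly help any member once negative-coefficient edges are at $0$ and there are no externalities.

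The paper does not address this point either, so the defect lies partly in the statement itself. But as written, your proof claims part~1 for arbitrary capacities and budgets, which is false under the paper's definitions.
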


    Those two results are complementary: Baïou--Balinski allows any constraint (integral or not) but is restricted to bipartite graphs, while Aharoni--Fleiner is true for any hypergraph but only when all agents have a budget of 1.

	\subsection{Two Counter-examples to the Existence of a Fully Stable Network}

    \begin{theorem} \label{thm:counter}
        There exists a linear \ref{asspt:lin} and bipartite \ref{asspt:bip} unconstrained network game $\game = (\agents,\edges,\payoff)$ that admits no fully stable network.
    \end{theorem}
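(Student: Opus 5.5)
The plan is to use the rigidity that linearity and the absence of budgets impose on fully stable networks. First I show that, in any linear unconstrained game, the conditions of Definition~\ref{def:full_stab} pin down a unique candidate network. Then I design externalities so that this candidate is blocked by a coalition that jointly cuts two links.

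\textbf{Step 1 (forcing the candidate).} Let $\game$ be linear and unconstrained, so $\capacity_\coalition = 1$ and $\budget_i = +\infty$. Write $u_i(\network) = \sum_{\coalition} A^i_\coalition \network_\coalition$.
\begin{itemize}
\item Suppose some $i \in \coalition$ has $A^i_\coalition < 0$. Condition 2 of Definition~\ref{def:full_stab}, applied to $i$ lowering only $\network_\coalition$, forces $\network_\coalition = 0$.
\item Suppose instead every $i \in \coalition$ has $A^i_\coalition > 0$. If $\network_\coalition < 1$, the coalition $\coalition$ can raise $\network_\coalition$ to $1$ and leave everything else unchanged. All members strictly gain, and no budget binds, so condition 3 is violated. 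Hence $\network_\coalition = 1$.
\end{itemize}
So if every edge is liked by all of its members, the only possible fully stable network is the all-ones network $\network^*$.

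\textbf{Step 2 (construction).} Take $\agents = \{1,2,3,4\}$ with $U=\{1,2\}$, $V = \{3,4\}$ and $\edges = \{\{1,3\},\{1,4\},\{2,3\}\}$; this is bipartite in the sense of \ref{asspt:bip}. Define
$$u_1 = \network_{\{1,3\}} + \network_{\{1,4\}} - 2\network_{\{2,3\}},\quad u_3 = \network_{\{1,3\}} + \network_{\{2,3\}} - 2\network_{\{1,4\}},$$
$$u_2 = \network_{\{2,3\}},\quad u_4 = \network_{\{1,4\}}.$$
These payoffs are linear, and every agent has coefficient $1$ on each of its own edges. By Step 1, the only candidate is $\network^* = (1,1,1)$.

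\textbf{Step 3 (blocking $\network^*$).} Consider the coalition $\coalition = \{1,3\}$. It keeps $\network_{\{1,3\}}=1$, which satisfies (a). It sets $\network_{\{1,4\}}$ and $\network_{\{2,3\}}$ to $0$; this is allowed by (b), since $\{1,4\}\ni 1$ and $\{2,3\}\ni 3$.
\begin{itemize}
\item Agent 1's payoff moves from $0$ to $1$.
\item Agent 3's payoff moves from $0$ to $1$.
\end{itemize}
Both strictly gain and budgets are infinite, so condition 3 fails. Hence no network is fully stable.

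The main obstacle is finding a deviation in which both members of the blocking coalition gain even though their shared link is already at full weight. The point of the design is that each member sacrifices an own link that hurts the other member through a negative externality, and each member's gain from the other's sacrifice outweighs its own loss. Checking Step 1 carefully, so that no other network survives, is the only remaining verification.
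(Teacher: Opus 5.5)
Your proof is correct. Step 1 does force $(1,1,1)$ as the only candidate, since every agent has coefficient $1$ on each of its own edges and budgets are infinite. At $(1,1,1)$, the deviation of $\{1,3\}$ to $y_{\{1,3\}}=1$, $y_{\{1,4\}}=y_{\{2,3\}}=0$ satisfies (a)--(c) and raises both $u_1$ and $u_3$ from $0$ to $1$.

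The skeleton matches the paper's: linearity without budgets pins down a unique candidate, and a negative externality produces a blocking joint reallocation. The example and the blocking mechanism differ, though.

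\textbf{The paper's example.} It uses three agents and two edges, with agent 1 disliking its own link. The candidate is $(x_{\{1,2\}},x_{\{2,3\}})=(0,1)$, and the blocking move strictly raises the common weight $x_{\{1,2\}}$ from $0$ to $1$ while agent 2 cuts $\{2,3\}$. This is a genuine ``increase one link, drop another'' trade, matching the verbal description of full stability.

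\textbf{Your example.} The common weight $x_{\{1,3\}}$ is already at capacity and stays put, so the deviation is a pure mutual cut. It is admissible only because condition (a) is the weak inequality $x_C \le y^C_C$. Under a strict reading of (a), $(1,1,1)$ would be fully stable in your game, whereas the paper's example works under either reading. It would be worth stating this dependence explicitly.

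What your construction buys in exchange is a sharper illustration. Full stability can fail at a network that is stable in the sense of the first existence theorem and at which every agent values all of its own links. The failure comes purely from two linked agents each imposing a negative externality on the other through a link they do not control.
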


    \begin{proof}
	   Consider the following network game.
	   The set of agents is $\agents = \{1,2,3\}$, and the set of edges is $\edges = \{\{1,2\},\{1,3\}\}$.
	   Payoffs are given by
       \begin{align*}
           \payoff_1(\network) &= -\network_{\{1,2\}}-2\network_{\{2,3\}}, \\
           \payoff_2(\network) &= 2\network_{\{1,2\}}+\network_{\{2,3\}}, \\
           \payoff_3(\network) &= \network_{\{2,3\}}.
       \end{align*}
	   There are neither capacity constraints nor budget constraints in this example. Note also that the underlying graph $(\agents,\edges)$ is bipartite.
	
	   We claim that the game $(\agents,\edges,\payoff)$ admits no fully stable network.
	   To see this, observe first that any candidate for full stability must coincide with the network $(\network_{\{1,2\}},\network_{\{2,3\}}) = (0,1)$.
	   Indeed, agent 1 strictly prefers the weight on edge $\{1,2\}$ to be equal to 0, while agents 2 and 3 have a common interest in setting the weight on edge $\{2,3\}$ equal to 1.
	   This network is therefore the only pairwise stable network in the sense of Bich and Morhaim, and the only candidate for full stability.
	
	   However, full stability fails.
	   Starting from the network $(0,1)$, the payoff vector is $(-2,1,1)$.
	   Consider now the network $(1,0)$.
	   At first glance, this deviation runs counter to the intuition of pairwise stability: agents 1 and 2 increase the weight on edge $\{1,2\}$, even though agent 1 would not benefit from such an increase in isolation, while agent 2 simultaneously decreases the weight on edge $\{2,3\}$, despite normally benefiting from it.
	   Nevertheless, this coordinated reallocation yields the payoff vector $(-1,2,0)$, so that both agents 1 and 2 strictly improve relative to $(0,1)$.    
    \end{proof}
	
	The economic intuition is that agents 1 and 2 reach a mutually beneficial agreement.
	Although agent 1 has no intrinsic incentive to increase the weight on $\{1,2\}$, they accept this increase because it comes together with a decrease in the weight on $\{2,3\}$, to which they are much more sensitive.
	Symmetrically, agent 2 is willing to weaken their relationship with agent 3 in exchange for a stronger connection with agent 1.
	This deviation is therefore profitable for both agents 1 and 2, even though neither unilateral move would be individually attractive.
	
	The instability persists.
	Starting from the network $(1,0)$, agents 2 and 3 have a common interest in increasing the weight on edge $\{2,3\}$, which leads to the network $(1,1)$.
	At this point, the initial agreement between agents 1 and 2 is no longer respected.
	Agent 1 then strictly prefers to sever the connection with agent 2, returning to the network $(0,1)$.
	This generates a cycle of profitable joint deviations, showing that no network can be fully stable in this game:
	$$(0,1) \longrightarrow_{\{1,2\}} (1,0) \longrightarrow_{\{2,3\}} (1,1) \longrightarrow_1 (0,1) \longrightarrow_{\{1,2\}} \dots$$
	
	This counter-example highlights the role played by negative externalities in the non-existence of a fully stable network.
	In particular, a cycle such as the one described above would not be possible without the negative externality faced by agent 1 with respect to the connection between agents 2 and 3. However, negative externalities are not the only obstacle to full stability.

    \begin{theorem} \label{thm:counter_2}
        There exists a continuous \ref{asspt:cont}, quasiconcave \ref{asspt:quasiconc}, separable \ref{asspt:sep} network game $\game = (\agents,\edges,\payoff, \capacity, \budget)$ with no externalities \ref{asspt:no_ext} that admits no fully stable network.
    \end{theorem}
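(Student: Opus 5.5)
The plan is to build a \emph{convex} version of the classical three-agent roommate cycle. With linear payoffs, the fractional roommate problem has a stable solution: the half-integral triangle of Aharoni--Fleiner. With strictly convex payoffs, agents prefer to concentrate all of their budget on a single partner. This should bring back the well-known integral non-existence.

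\textbf{Construction.} Take $\agents=\{1,2,3\}$ and $\edges=\{\{1,2\},\{1,3\},\{2,3\}\}$, with all capacities and budgets equal to $1$. Define
\begin{align*}
\payoff_1(\network)&=2\network_{\{1,2\}}^2+\network_{\{1,3\}}^2,\\
\payoff_2(\network)&=2\network_{\{2,3\}}^2+\network_{\{1,2\}}^2,\\
\payoff_3(\network)&=2\network_{\{1,3\}}^2+\network_{\{2,3\}}^2 .
\end{align*}
Each payoff is continuous and separable, and depends only on the agent's own edges, so it has no externalities. Each term is increasing on $[0,1]$, hence monotone and therefore quasiconcave in each coordinate. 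The cyclic preferences are as follows: agent $1$'s top edge is $\{1,2\}$, agent $2$'s is $\{2,3\}$, and agent $3$'s is $\{1,3\}$. Condition 2 of \Cref{def:full_stab} holds automatically because payoffs are increasing. So the whole argument concerns condition 3, applied to one specific deviation per edge.

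\textbf{Key deviation.} For the edge $\{i,j\}$ that is the top edge of $i$, its two members cut every other incident edge to $0$ and set $\networkbis_{\{i,j\}}=1$. The result is feasible for both budgets. Agent $i$ then gets $2$, and $j$ gets $1$.

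Under the budget $\network_{\{1,2\}}+\network_{\{1,3\}}\le 1$, we have $\payoff_1(\network)\le 2$, with equality only when $\network_{\{1,2\}}=1$. The analogous statements hold cyclically for agents $2$ and $3$. So the deviation on $\{1,2\}$ fails condition 3 only if $\network_{\{1,2\}}=1$ or $\network_{\{1,2\}}^2+2\network_{\{2,3\}}^2\ge 1$, and similarly for the other two edges.

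\textbf{Case 1: some top edge has weight $1$.} Say $\network_{\{1,2\}}=1$, which forces the other two edges to $0$. Then agents $2$ and $3$ deviate to $\network_{\{2,3\}}=1$ and obtain $2>1$ and $1>0$ respectively, so $\network$ is not fully stable. The other two subcases follow by the cyclic symmetry.

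\textbf{Case 2: all three weights are $<1$.} Then all three deviations fail only if
\[
\network_{\{1,2\}}^2+2\network_{\{2,3\}}^2\ge1,\qquad
\network_{\{2,3\}}^2+2\network_{\{1,3\}}^2\ge1,\qquad
\network_{\{1,3\}}^2+2\network_{\{1,2\}}^2\ge1 .
\]
Summing gives $\Sigma:=\sum_e\network_e^2\ge1$. The main step, and the only delicate one, is to show $\Sigma<1$ in this case.

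\begin{itemize}
\item If $0<\network_{\{1,2\}}<1$: agent $3$'s budget gives $\network_{\{1,3\}}+\network_{\{2,3\}}\le1$. The budgets of agents $1$ and $2$ give $\max(\network_{\{1,3\}},\network_{\{2,3\}})\le1-\network_{\{1,2\}}$. Hence
\[
\Sigma\le \network_{\{1,2\}}^2+(1-\network_{\{1,2\}})<1 .
\]
\item If $\network_{\{1,2\}}=0$: then $\Sigma=\network_{\{1,3\}}^2+\network_{\{2,3\}}^2\le1$, with equality only if one of these weights equals $1$. That is excluded in Case 2.
\end{itemize}

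This contradiction shows that no feasible network is fully stable. I expect the only real obstacle to be getting the strict inequality $\Sigma<1$ away from the vertices, as handled above.
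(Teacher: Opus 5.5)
Your proof is correct and uses the same construction idea as the paper: a convex, increasing, separable version of the three-agent roommate cycle on the triangle with unit budgets, where every blocking move is a pair jumping to a single full-weight edge. The difference is only in the verification. The paper uses the kinked piecewise-linear $f$ and splits cases by the number of non-zero weights, while you use quadratics and sum the three failure inequalities to contradict $\sum_e x_e^2<1$, which handles all interior networks at once.
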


    \begin{proof}
        Consider the following network game.
        The set of agents is $\agents = \{1,2,3\}$, the set of edges is $\edges = \{\{1,2\},\{1,3\}, \{2,3\}\}$, and the budget vector is $\budget = (1, 1, 1)$.
        Payoffs are given by
        \begin{minipage}{0.34\textwidth}
            \begin{align*}
            \payoff_1(\network) &= f(\network_{\{1,2\}}) + \network_{\{1,3\}}, \\
            \payoff_2(\network) &= f(\network_{\{2,3\}}) + \network_{\{1,2\}}, \\
            \payoff_3(\network) &= f(\network_{\{1,3\}}) + \network_{\{2,3\}},
        \end{align*}
        \end{minipage} 
        \begin{minipage}{0.65\textwidth}
\begin{figure}[H]
    \includegraphics[width = 0.7\linewidth]{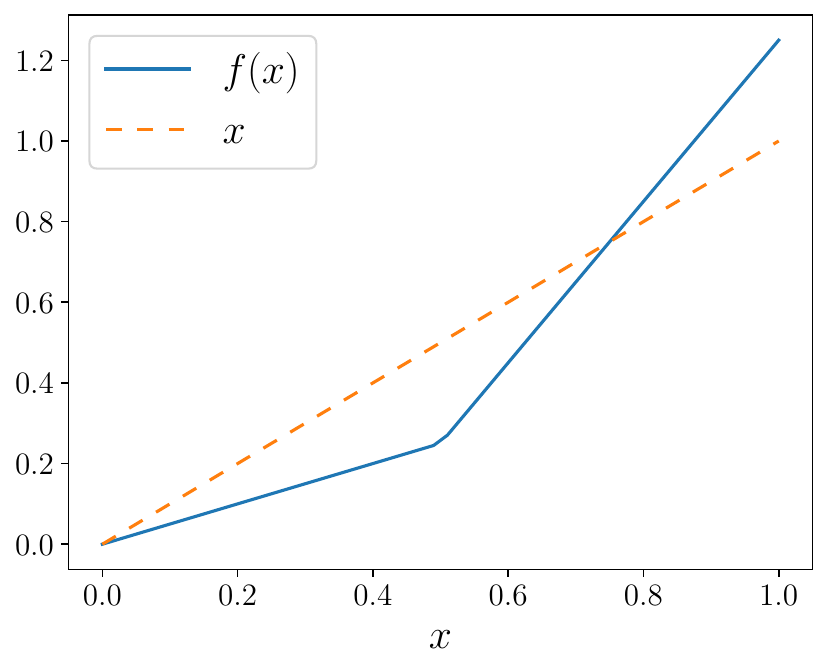}\vspace{-0.4cm}
    \caption{Graphs of the two functions that compose the payoff of agents in the counter-example used in the proof of \Cref{thm:counter_2}.}
    \label{fig:counter_2}
\end{figure} 
\end{minipage} 

        where $f(x):[0, 1] \to \mathbb{R}$ is defined by $f(x) = (x/2)\mathbbm{1}_{x \leq 1/2} + (2x-3/4)\mathbbm{1}_{x\geq 1/2}$ and is illustrated in \Cref{fig:counter_2}.

        We claim that this game admits no fully stable network. Let $\network = (\network_{\{1, 2\}}, \network_{\{1, 3\}}, \network_{\{2, 3\}}) \in \networksab$. We reason on the number of non-zero components of $\network$. If $\network = 0_{\mathbb{R}^3}$, since all payoffs are increasing, any weight increase is beneficial to all agents involved, so $0_{\mathbb{R}^3}$ is not fully stable. If only one weight is non-zero, e.g., $\network = (\network_{\{1, 2\}}, 0, 0)$, then agents $2$ and $3$ are both better off deviating to $(0, 0, 1)$, so no network with only one non-zero weight can be fully stable. If two weights at least are non-zero, e.g., $\network_{\{1, 2\}}$ and $\network_{\{1, 3\}}$, by the budget constraints, one of them has to be not greater than $1/2$. If $\network_{\{1, 2\}} \leq 1/2$, $\payoff_1(\network)<1$. Moreover, $\network_{\{1, 3\}} \leq 1 - \network_{\{1, 2\}}<1$, so $\payoff_3(\network)<5/4$. Therefore, $\network' =(0, 1, 0)$ gives $\payoff_1(\network') = 1 > \payoff_1(\network)$ and $\payoff_3(\network') = 5/4 > \payoff_3(\network)$, so $\network$ is not fully stable. By a similar argument, if $\network_{\{1, 3\}} \leq 1/2$, both agents $2$ and $3$ are better off with the vector $\network' =(0, 0, 1)$. In conclusion, this game admits no fully stable network.
    \end{proof}

Observe that each player payoff function is, in addition of being without externalities, quasi-concave and separable, it is in addition convex and increasing. Thus, here, the reason for the non-existence of fully stable networks is not the externalities (which are absent), but the non-concavity of some payoff functions. Agents have weights that are very beneficial to them if they are high, but uninteresting if they are low, what economists call \emph{increasing returns}. For this reason, interior networks (i.e., networks with $\network_{\{i, j\}} \in (0, 1)$) are inefficient and blocked by networks on the boundary. If, in addition, boundary networks form a profitable deviation cycle as in the the proof:
    $$(1, 0, 0) \longrightarrow_{\{2,3\}} (0, 0, 1) \longrightarrow_{\{1,3\}} (0, 1, 0) \longrightarrow_{\{1,2\}} (1, 0, 0) \longrightarrow_{\{2,3\}} \dots,$$
    implying that no fully stable network can exist in this example.

    \section{Full Stability: Existence Results} \label{sec:full_exist}

    In this section, we present two existence results that avoid the issues of the two counter-examples presented above by making additional assumptions on the payoff functions.
    
	\subsection{The Hypergraphic Case without Constraints and with Positive Externalities}

    From \Cref{thm:counter} we know that the presence of negative externalities can prevent the existence of fully stable networks, even without budget constraints and very strong assumptions on the shape of the graph or the payoff functions (e.g., linearity). However, we show now that when externalities are assumed to be positive we can recover existence under separability.
    
    \begin{theorem} \label{thm:hypergraph_full_wo_cnst}
        Let $\game = (\agents,\edges,\payoff)$ be a continuous \ref{asspt:cont} and quasiconcave \ref{asspt:quasiconc} unconstrained network game.
        Suppose that the following condition is fulfilled: for every stable network $\network \in \networks$, if there exists $\networkbis^\coalition \in \networks^\coalition$ such that
			\begin{enumerate}
				\item $\network_\coalition \leq \networkbis^\coalition_\coalition$;
				\item for every $\coalition' \in \edges^\coalition \backslash \{\coalition\}$, $\networkbis^\coalition_{\coalition'} \leq \network_{\coalition'}$;
				\item for every $i \in \coalition$, $\payoff_i(\networkbis^\coalition,\network^{-\coalition}) > \payoff_i(\network)$,
			\end{enumerate}
        then there exists $z_\coalition \in \networks_\coalition$ such that $\network_\coalition \leq z_\coalition$ and, for every $i \in \coalition$, $\payoff_i(z_\coalition,\network_{-\coalition}) > u_i(\network)$.
        Then the set of stable networks and the set of fully stable networks of $~\game$ coincide.
        In particular, $\game$ admits a fully stable network.
    \end{theorem}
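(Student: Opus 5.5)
We show the two inclusions between the set of stable networks and the set of fully stable networks separately, and then obtain existence from \Cref{thm:hypergraph_stable}. Since the game is unconstrained, we treat it as a constrained game with $\capacity_\coalition = 1$ for every $\coalition$ and $\budget_i = +\infty$ for every $i$. Then $\networksab = \networks$, and every budget clause in both definitions is vacuous. Condition~3 of stability therefore reads: no $\networkbis_\coalition \geq \network_\coalition$ makes every member of $\coalition$ strictly better off. Condition~3 of full stability reads: no $\networkbis^\coalition$ satisfying (a)--(b) makes every member of $\coalition$ strictly better off.

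\emph{Fully stable $\Rightarrow$ stable.} Let $\network$ be fully stable. Condition~1 is identical in both definitions.

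For condition~2 of stability, fix $\coalition$, $i \in \coalition$ and $\networkbis_\coalition \leq \network_\coalition$. Define $\networkbis^i \in \networks^i$ by $\networkbis^i_\coalition = \networkbis_\coalition$ and $\networkbis^i_{\coalition'} = \network_{\coalition'}$ for the other $\coalition' \in \edges^i$. Then $(\networkbis^i,\network^{-i}) = (\networkbis_\coalition,\network_{-\coalition})$, and condition~2 of full stability gives the required inequality.

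For condition~3, suppose some $\networkbis_\coalition \geq \network_\coalition$ makes every member of $\coalition$ strictly better off. Take $\networkbis^\coalition$ equal to $\networkbis_\coalition$ on $\coalition$ and equal to $\network$ on $\edges^\coalition\setminus\{\coalition\}$. This satisfies (a)--(b) and improves all members of $\coalition$. Since no budget can be exceeded, full stability is contradicted.

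\emph{Stable $\Rightarrow$ fully stable.} Let $\network$ be stable. Condition~1 is immediate.

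Condition~3 is exactly where the hypothesis of the theorem enters. Suppose some $\networkbis^\coalition$ satisfies (a)--(c). The hypothesis then produces $z_\coalition \geq \network_\coalition$ with $\payoff_i(z_\coalition,\network_{-\coalition}) > \payoff_i(\network)$ for all $i\in\coalition$. With infinite budgets, this contradicts condition~3 of stability.

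\emph{Main obstacle.} Condition~2 of full stability allows agent $i$ to lower several incident weights at once, whereas stability only allows one weight at a time. We argue by contradiction. Suppose $\networkbis^i \leq \network^i$ componentwise and $\payoff_i(\networkbis^i,\network^{-i}) > \payoff_i(\network)$. Take $\coalition = \{i\}$ if $\{i\}\in\edges$; in general take any $\coalition \ni i$. The vector $\networkbis^\coalition$ equal to $\networkbis^i$ on $\edges^i$ and to $\network$ elsewhere in $\edges^\coalition$ satisfies (a)--(b).

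This does not yet give (c) for all members of $\coalition$, only for $i$. So we need a separate argument. We decrease the coordinates one at a time along the path
\[
\network = w^0, w^1, \dots, w^m = (\networkbis^i,\network^{-i}).
\]
Some step must strictly increase $\payoff_i$. The difficulty is that stability only controls deviations from $\network$ itself, not from the intermediate $w^k$.

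The first thing we try is to exploit componentwise quasiconcavity together with continuity, to show that a profitable multi-coordinate decrease implies a profitable single-coordinate decrease at $\network$. If that fails in general, we fall back on reading the hypothesis as also covering singleton coalitions (the $\coalition=\{i\}$ case). We expect this step to be where the real care is needed. We would check whether the intended reading of the statement makes condition~2 follow from the single-coordinate version, since it is plausibly part of what the authors' condition implicitly encodes.

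Once the two sets are shown equal, existence of a fully stable network follows directly from \Cref{thm:hypergraph_stable} applied with trivial constraints.
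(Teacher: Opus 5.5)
\textbf{Gap: condition~2 of full stability.} The parts you actually prove are correct and follow the paper's argument. Fully stable implies stable. At a stable network, the hypothesis turns any deviation satisfying (a)--(c) of Definition~\ref{def:full_stab} into a single-coordinate increase that contradicts condition~3 of stability. Existence is then read off from \Cref{thm:hypergraph_stable}.

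The genuine gap is the step you flag yourself and leave open: condition~2 of full stability, where a single agent lowers several incident weights at once. Neither repair you suggest can work.
\begin{itemize}
\item Component-wise quasiconcavity plus continuity, even global concavity, does not turn a profitable multi-link cut into a profitable single-link cut.
\item The singleton reading needs $\{i\}\in E$, which never holds in a graph.
\item For $C\ni i$ with $|C|\ge 2$, condition (c) demands strict gains for the other members of $C$, and (a) forbids lowering $x_C$ itself.
\end{itemize}

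\textbf{The missing step is false, so the set equality cannot be proved as stated.} Take $N=\{1,2,3\}$, $E=\{\{1,2\},\{1,3\}\}$, no constraints, and
\[
u_1(x)=-\max\bigl(x_{\{1,2\}},x_{\{1,3\}}\bigr),\qquad u_2(x)=x_{\{1,2\}},\qquad u_3(x)=x_{\{1,3\}}.
\]
This game is continuous, concave, and has no externalities.
\begin{itemize}
\item Agent~1 gains from lowering $x_{\{1,2\}}$ alone iff $x_{\{1,2\}}>x_{\{1,3\}}$, and symmetrically for $x_{\{1,3\}}$. Agents~2 and~3 never gain from cutting. On the diagonal, raising a link strictly hurts agent~1. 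Hence the stable networks are exactly the $(t,t)$ with $t\in[0,1]$.
\item At $(t,t)$, a deviation of $\{1,2\}$ satisfying (a)--(c) needs $y_{\{1,2\}}>t$ for agent~2. That forces $u_1\le -y_{\{1,2\}}<-t$, so agent~1 is worse off. The same holds for $\{1,3\}$, so the theorem's hypothesis holds vacuously.
\item Yet for $t>0$, agent~1 strictly gains by cutting both links to $0$. So $(t,t)$ is stable but not fully stable. Only $(0,0)$ is fully stable: existence survives in this example, but the equality does not.
\end{itemize}

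The paper's own proof has exactly this hole. It asserts that a stable but not fully stable network admits, ``by definition'', a deviation satisfying (a)--(c). That overlooks failures of condition~2, and the existence claim is deduced from the resulting equality. So you were right to distrust this step.

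A correct version needs an extra assumption that excludes profitable multi-link cuts. For example, assume each $u_i$ is additively separable in the weights of the coalitions containing $i$. Then a multi-link cut is a sum of single cuts, each unprofitable by condition~2 of stability. This is the situation covered by \Cref{prop:min_argmax}.
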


    \begin{proof}
        First of all, it is immediate to see that any fully stable network is also stable.
        
        On the other hand, suppose that $\network$ is stable, but not fully stable.
        By definition, this directly implies that there exists $\networkbis^\coalition \in \networks^\coalition$ such that \begin{enumerate}
				\item $\network_\coalition \leq \networkbis^\coalition_\coalition$;
				\item for every $\coalition' \in \edges^\coalition \backslash \{\coalition\}$, $\networkbis^\coalition_{\coalition'} \leq \network_{\coalition'}$;
				\item for every $i \in \coalition$, $\payoff_i(\networkbis^\coalition,\network^{-\coalition}) > \payoff_i(\network)$.
			\end{enumerate}
        Hence by assumption, there exists $z_\coalition \in \networks_\coalition$ such that $\network_\coalition \leq z_\coalition$ and, for every $i \in \coalition$, $\payoff_i(z_\coalition,\network_{-\coalition}) > \payoff_i(\network)$.
        This means precisely that $\network$ is not stable, a contradiction.
        
        Finally, note that by \Cref{thm:hypergraph_stable} (considering neither capacity constraints nor budget constraints), the set of stable networks of $\game$ is nonempty.
    \end{proof}

    The condition in the previous theorem states that if the members of a coalition have an incentive to increase the weight of their common link, possibly while simultaneously decreasing other links, then they must in fact have an incentive to increase their common weight even by a small amount, without any additional decreases.
    
    In particular, any network game that is separable \ref{asspt:sep} and has positive externalities \ref{asspt:pos_ext} verifies the assumption of \Cref{thm:hypergraph_full_wo_cnst}.
    However, in that case, we can go further and compute explicitly all fully stable matchings.
    
	\begin{proposition} \label{prop:min_argmax}
        Let $\game = (\agents,\edges,\payoff)$ be a continuous \ref{asspt:cont}, separable \ref{asspt:sep}, and quasiconcave \ref{asspt:quasiconc} unconstrained network game with positive externalities \ref{asspt:pos_ext}. Then the fully stable networks of $\game$ are exactly the networks that satisfy
        \begin{equation} \label{eq:min_argmax}
            \forall \coalition \in \edges, \network_\coalition \in  \left[\min\limits_{i \in \coalition}\left(\min (\argmax \payoff_i(\cdot, x_{-\coalition})) \right), \min\limits_{i \in \coalition}\left(\max( \argmax \payoff_i(\cdot, x_{-\coalition}) ) \right)   \right] .
        \end{equation}
        In particular, if each agent's payoff $\payoff_i$ is component-wise \emph{strictly} quasiconcave, then equation \eqref{eq:min_argmax} gives the unique fully stable network of $\game$.
    \end{proposition}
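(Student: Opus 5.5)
The plan is to use separability to reduce everything to one-dimensional problems, one per coalition, and then prove the two inclusions separately. Write $\payoff_i = \sum_{\coalition \in \edges} \payoff_{i,\coalition}$. For $\coalition \ni i$, the set $\argmax \payoff_i(\cdot,\network_{-\coalition})$ equals $\argmax_{t\in[0,1]} \payoff_{i,\coalition}(t)$, so it does not depend on $\network_{-\coalition}$. By continuity and \ref{asspt:quasiconc} it is a nonempty closed interval $[a_i^\coalition, b_i^\coalition]$. Moreover $\payoff_{i,\coalition}$ is nondecreasing on $[0,a_i^\coalition]$ with values strictly below the maximum on $[0,a_i^\coalition[$, and nonincreasing on $[b_i^\coalition,1]$ with values strictly below the maximum on $]b_i^\coalition,1]$. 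For $\coalition \not\ni i$, I read \ref{asspt:pos_ext} together with \ref{asspt:sep} as saying that $\payoff_{i,\coalition}$ is nondecreasing. Equation \eqref{eq:min_argmax} then reads $\network_\coalition \in [\min_{i\in\coalition} a_i^\coalition, \min_{i\in\coalition} b_i^\coalition]$; this interval is nonempty since $a_i^\coalition \le b_i^\coalition$ for every $i$.

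\emph{Sufficiency.} Take $\network$ satisfying \eqref{eq:min_argmax}; condition 1 of \Cref{def:full_stab} is void in the unconstrained case.

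For condition 2, fix agent $i$ and a componentwise decrease on $\edges^i$. For each $\coalition \in \edges^i$ we have $\network_\coalition \le b_i^\coalition$. So either $\network_\coalition \in [a_i^\coalition,b_i^\coalition]$, and any move is weakly worse, or $\network_\coalition < a_i^\coalition$, and a decrease stays in the nondecreasing region. Summing over coalitions gives no gain.

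For condition 3, fix $\coalition$ and a deviation $\networkbis^\coalition$ that raises $\network_\coalition$ and lowers every other weight in $\edges^\coalition$. Pick $m \in \coalition$ with $a_m^\coalition = \min_i a_i^\coalition$. Then $a_m^\coalition \le \network_\coalition \le \min_i b_i^\coalition \le b_m^\coalition$, so $\network_\coalition$ already maximizes $\payoff_{m,\coalition}$ and raising it cannot help $m$. The decreases on $\edges^m\setminus\{\coalition\}$ do not help $m$ by the argument for condition 2. The decreases on coalitions not containing $m$ do not help $m$ by positive externalities. Hence $m$ does not strictly gain, and condition 3 holds vacuously.

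\emph{Necessity.} Take $\network$ fully stable.

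For the upper bound, if $\network_\coalition > b_k^\coalition = \min_i b_i^\coalition$, agent $k$ strictly gains by unilaterally lowering $\network_\coalition$ to $b_k^\coalition$, which contradicts condition 2.

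For the lower bound, suppose $\network_\coalition < \min_i a_i^\coalition$. Then every $i \in \coalition$ has $\payoff_{i,\coalition}(\network_\coalition) < \max \payoff_{i,\coalition}$, and I want a single $t > \network_\coalition$ at which every member strictly improves. This gives a profitable deviation for $\coalition$ that involves no other changes, and since there are no budgets, condition 3 is contradicted. The natural first choice is $t = \min_i a_i^\coalition$. This works whenever each $\payoff_{i,\coalition}$ is strictly increasing below its argmax, which holds for instance under strict quasiconcavity. In that case all argmax sets are singletons and the interval in \eqref{eq:min_argmax} collapses to the single point $\min_i a_i^\coalition$, which gives the uniqueness statement.

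\emph{Main obstacle.} The hardest step is this lower bound in the merely quasiconcave case. Some $\payoff_{i,\coalition}$ may have a flat plateau on $[\network_\coalition, \min_j a_j^\coalition]$, so that member $i$ is indifferent to the raise to $t=\min_j a_j^\coalition$. Pushing $t$ higher to get past the plateau may then hurt the member attaining the minimal $a_j^\coalition$. My first attempt will be to exploit the ordering of the intervals $[a_i^\coalition,b_i^\coalition]$ and choose $t$ just above $\network_\coalition$ inside every member's region of strict increase. If plateaus genuinely obstruct this, I would tighten the lower endpoint in \eqref{eq:min_argmax} accordingly or record that a mild strict-monotonicity-below-the-argmax condition is being used.
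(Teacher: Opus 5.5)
Your sufficiency argument, the upper-bound half of necessity, and the strictly quasiconcave case are correct and follow the paper's proof essentially step by step: the same minimizing member $m$, and the same use of separability and positive externalities for the accompanying decreases. The step you single out as the main obstacle is a genuine gap, but it cannot be closed, because under mere quasiconcavity the necessity direction of the proposition is false.

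Take $\agents=\{1,2\}$, $\edges=\{\{1,2\}\}$, no constraints, $\payoff_1(t)=\max\{0,t-1/2\}$ and $\payoff_2(t)=-|t-1/4|$, where $t$ is the weight of the single edge. Both payoffs are continuous and quasiconcave, and separability and (PE) hold trivially. Here $\argmax \payoff_1=\{1\}$ and $\argmax\payoff_2=\{1/4\}$, so \eqref{eq:min_argmax} allows only $t=1/4$. Yet every $x\in[0,1/4]$ is fully stable:
lowering $x$ never helps agent 1, whose payoff is nondecreasing, nor agent 2, whose payoff is increasing on $[0,1/4]$; and a raise to $t>x$ strictly helps agent 1 only if $t>1/2$, while it strictly helps agent 2 only if $t<1/2-x\le 1/2$.
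The paper's proof passes over exactly this point. Its Case (2) asserts that if $x_\coalition$ lies below every member's argmax, then ``all members of $\coalition$ have an incentive to increase $x_\coalition$.'' That only gives each member some individually preferred larger value, not a common $t$ that strictly improves all of them.

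So drop your first remedy. In the example agent 1 is flat on $[0,1/2]$, so for any $x\in[0,1/4]$ there is no region of strict increase for every member just above $x$. Adopt your fallback instead.

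If you assume each $\payoff_{i,\coalition}$ is strictly increasing on $[0,a_i^\coalition]$ (true under strict quasiconcavity), your choice $t=\min_i a_i^\coalition$ completes the lower bound. Your argument then proves the ``in particular'' uniqueness claim as written.

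Under the original hypotheses, the correct statement comes from replacing the lower endpoint. Your condition-2 argument plus (PE) shows that, once $x_{\coalition'}\le\min_{i\in\coalition'} b_i^{\coalition'}$ for every $\coalition'$, the accompanying decreases in a coalitional deviation never help any member. Hence $x$ is fully stable if and only if, for every $\coalition$, $x_\coalition\le\min_{i\in\coalition}b_i^\coalition$ and no $t>x_\coalition$ satisfies $\payoff_{i,\coalition}(t)>\payoff_{i,\coalition}(x_\coalition)$ for all $i\in\coalition$. This set contains the one described by \eqref{eq:min_argmax}. It coincides with it when no member's component payoff has a plateau below its argmax.
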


    \begin{proof}[Sketch of proof.]
        In the absence of budget constraints and under the assumptions of the proposition, each weight $\network_\coalition$ can be increased until it reaches the argmax of the partial payoff function of one of $\coalition$'s members. If $\network_\coalition$ is smaller than this, all members of $\coalition$ will agree to increase it so $\network$ would be unstable. On the other hand, if $\network_\coalition$ goes above the argmax of the partial payoff function of some member of $\coalition$, this agent will have an incentive to decrease it. The full proof is provided in \Cref{app.proof:min_argmax}.
    \end{proof}

    \begin{remark}
        The inclusion of the set described by equation \eqref{eq:min_argmax} inside the set of pairwise stable networks was already proved in a lemma in the proof of the main theorem of Bich and Morhaim  (see Lemma A.2 in \cite{bich_morhaim_2020}), but only for graphs (as they did not consider hypergraphs).
    \end{remark}

    A natural question is now whether, assuming positive externalities \ref{asspt:pos_ext}, the separability \ref{asspt:sep} and unconstrained-ness assumptions can be lifted. We have a partial answer: from \Cref{thm:counter_2}, we know that adding constraints in this setting can prevent existence. However, the question of the existence of a fully stable network in continuous \ref{asspt:cont}, quasiconcave \ref{asspt:quasiconc}, \emph{non-separable}, unconstrained network games with positive externalities \ref{asspt:pos_ext} is still open.
    
	\subsection{The Bipartite Case with Constraints and with Positive Externalities}

    From \Cref{thm:counter_2}, we know that payoffs that are not concave can prevent the existence of a fully stable network when there are budget constraints. To bypass this issue, we propose an algorithm for bipartite graphs called \emph{\alg} (\algshort), that generalizes the \emph{Row-Greedy Algorithm} of Baïou and Balinski \cite{Baiou_02}, itself generalizing the \emph{Deferred Acceptance Algorithm} of Gale and Shapley \cite{gale_college_1962}. For any continuous \ref{asspt:cont}, separable \ref{asspt:sep} and concave \ref{asspt:conc} network game without externalities \ref{asspt:no_ext} on a bipartite graph, \algshort~  converges to a fully stable network, proving that those assumptions are sufficient to recover existence. We further prove that this algorithm can be extended to payoffs with positive externalities \ref{asspt:pos_ext}.

    Algorithm \ref{alg:GDA} describes the full algorithm; let us informally explain it. We assume that $(\agents, \edges)$ is a bipartite graph with $\agents = U \sqcup V$. \algshort~ tries to give each $U$ agent the weights that maximize their payoff (this can be done separately for each $U$ agent since the graph is bipartite and there are no externalities). It then submits the obtained network to $V$ agents, who can decrease the weights of links they are a part of in the way that maximizes their payoff. If an agent $j \in V$ decreases their connection with an agent $i \in U$, $i$ is forbidden of ever increasing it again. This describes one full step of the algorithm. We then start again from the obtained network, $U$ agents increase the weights of links they are a part of in a way that maximizes their payoff, and then $V$ agents decrease them, and so on. The algorithm stops when $U$ agents cannot increase their payoff. We can prove that this algorithm converges to a fully stable network, which is sufficient to prove its existence. However, it might require an infinite number of steps to converge. When applied to a fractional matching problem, the algorithm reduces to the Row-Greedy algorithm of Baïou and Balinski \cite{Baiou_02}.

    \begin{algorithm}[ht]
        \SetAlgoLined
        \KwData{Network game $\game = (\agents,\edges,\payoff, \capacity, \budget)$ with $(\agents = U \sqcup V, \edges)$ a bipartite graph}
        \KwResult{Fully stable network $\network$}
        $\network \leftarrow 0_{\mathbb{R}^{\edges}}$\;
        \While{True}{
        $\network' \leftarrow 0_{\mathbb{R}^{\edges}}$\;
        \For{$i \in U$}{
        $\network'^{i} \leftarrow \argmax\limits_{\substack{\networkbis^i \in \prod_{\coalition \ni i}[\network_\coalition, \capacity_\coalition] \\ \sum_{\coalition \ni i} \networkbis^i_\coalition \leq \budget_i}} \payoff_i(\networkbis^i)$\;
        If the argmax contains several elements, if $0_{\mathbb{R}^{\edges^j}}$ is one of them choose it, otherwise choose one arbitrarily\;}
        \If{$\network' = 0_{\mathbb{R}^{\edges}}$}{
        \textbf{Terminate and return $\network$.}
        }
        \For{$j \in V$}{
        $\network^{j} \leftarrow \argmax\limits_{\substack{\networkbis^j \in \prod_{\coalition \ni j} [0, \network'_\coalition]\\ \sum_{\coalition \ni j} \networkbis^j_\coalition \leq \budget_j}} \payoff_j(\networkbis^j)$ \;
        If the argmax contains several elements, if $0_{\mathbb{R}^{\edges^j}}$ is one of them choose it, otherwise choose one arbitrarily\;}
        \For{$\coalition \in \edges$}{
        \If{$\network_\coalition < \network'_\coalition$}{
        $\capacity_\coalition \leftarrow \network_\coalition$\;
        }
        }
        \If{Some $\capacity_C$ has been modified}{Continue (go back to \textbf{repeat} without checking end condition)\;}
        }
        \caption{\alg}
        \label{alg:GDA}
    \end{algorithm}

    \begin{theorem} \label{thm:bipartite_full}
    Let $\game = (\agents,\edges,\payoff, \capacity, \budget)$ be a continuous \ref{asspt:cont} and bipartite \ref{asspt:bip} network game.
        \begin{enumerate}
            \item If $\game$ has no externalities \ref{asspt:no_ext}, then \algshort~ converges to a network $\network \in \networksab$.
            \item If, in addition, $\game$ is separable \ref{asspt:sep} and concave \ref{asspt:conc}, then $\network$ is fully stable.
        \end{enumerate}
    \end{theorem}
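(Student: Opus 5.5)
The plan is to track the algorithm through the sequence of networks $\network^{(t)}$ produced by the $V$-step (the $\network$ variable after round $t$) and the sequence of capacity vectors $\capacity^{(t)}$. The point of the paper's rule that any edge cut by a $V$ agent has its capacity permanently set to its current value is that, from that round on, $U$ agents can no longer raise that edge.

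\textbf{Part 1 (convergence).} I would first show three monotonicity facts:
\begin{itemize}
\item The capacities $\capacity^{(t)}_\coalition$ are nonincreasing in $t$.
\item The proposals $\network'^{(t)}_\coalition$ are nonincreasing. Since the $U$-step maximizes over weights $\networkbis^i \geq \network^{(t-1)}$ with the reduced caps, I expect to show $\network'^{(t)} \le \network'^{(t-1)}$ componentwise.
\item The accepted weights $\network^{(t)}_\coalition$ are nondecreasing on each edge. For a $U$ agent, the new feasible set is contained in the old one and still contains the point that $V$ left. Without externalities $\payoff_i$ depends only on $\network^i$, and the tie-breaking rule is used to keep choices consistent.
\end{itemize}
Since every coordinate lies in $[0,1]$, these monotone sequences converge, say $\network^{(t)}\to \network^*$ and $\capacity^{(t)}\to\capacity^*$.

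Feasibility is routine. Each $\network^{(t)}$ lies in $\networksab$, because $V$ agents respect their budgets and $U$ budgets are respected by $\network' \ge \network$. Since $\networksab$ is closed, the limit $\network^*$ is feasible. If the algorithm terminates in finitely many steps, the output is trivially in $\networksab$.

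\textbf{Part 2 (full stability), additional assumptions \ref{asspt:sep}, \ref{asspt:conc}.} Separability and concavity make each agent's choice problem a concave, separable resource-allocation problem over a box intersected with a budget simplex. Its optimality is characterized by a KKT-type "equal marginal" condition: there is a threshold $\lambda_i$ such that every coordinate strictly inside its bounds has one-sided derivatives bracketing $\lambda_i$. I would pass to the limit using continuity and Berge's maximum theorem, obtaining at $\network^*$ two properties:
\begin{itemize}
\item For each $j\in V$, $\network^{*j}$ maximizes $\payoff_j$ over $\prod[0,\network^*_\coalition]$ within budget. This gives condition 2 of Definition~\ref{def:full_stab} for $V$ agents.
\item For each $i\in U$, $\network^{*i}$ maximizes $\payoff_i$ over $\prod[\network^*_\coalition,\capacity^*_\coalition]$ within budget. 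Edges with $\capacity^*_\coalition=\network^*_\coalition$ are exactly those that some $V$ agent rejected.
\end{itemize}
For condition 2 on $U$ agents, I would argue that any decrease that is profitable for $i$ would have made a different point better in the $U$-step at earlier rounds, using concavity and separability.

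For condition 3, take an edge $\coalition=\{i,j\}$ and a deviation $\networkbis^\coalition$ that raises $\coalition$ and lowers the other edges of $i$ and $j$, with both strictly improving and both budgets respected. There are two cases. If $\network^*_\coalition<\capacity^*_\coalition$, then by separability and concavity the improvement for $i$ implies that $i$'s marginal gain on $\coalition$ exceeds the marginal gain of what $i$ gives up. This contradicts $U$-optimality at $\network^*$, since raising $\coalition$ alone, possibly paired with decreases, was feasible in the $U$ problem. If instead $\capacity^*_\coalition=\network^*_\coalition$, then $j$ rejected more weight on $\coalition$ at some round. By concavity and the marginal comparison, $j$ preferred its other edges at the margin then, and this preference persists in the limit because accepted weights only grow. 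That contradicts $j$ strictly gaining from the trade.

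\textbf{Main obstacle.} The hardest step will be this final condition-3 argument. It requires turning the "joint reallocation" into marginal-rate comparisons. The complication is that concavity only gives one-sided derivatives, and that $V$'s past rejections must remain valid against a moving limit. I would try first to formulate an explicit supergradient lemma for separable concave maximization over box-plus-budget sets, and apply it at $\network^*$.
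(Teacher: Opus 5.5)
Your case split for condition~3 matches the paper's: either $i$'s own choices rule the deviation out, or $\alpha_C=x_C$ because $j$ rejected weight on $C$ and that rejection stays valid. However, several load-bearing claims are false.

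\textbf{Part~1.} Only the first monotonicity fact holds.
\begin{itemize}
\item Proposals are not componentwise nonincreasing. When a $U$ agent's proposal on one edge is cut, it redirects the freed budget to an edge on which it had proposed $0$, so $x'^{(t)}_{C}$ jumps up (Gale--Shapley moving down a preference list).
\item Accepted weights are not nondecreasing either. A budget-constrained $V$ agent holding weight on $C_1$ drops it when a better proposal on another edge arrives later, so $x^{(t)}_{C_1}$ falls.
\end{itemize}
What is true is per-edge unimodality. On each edge $C$, $x^{(t)}_C=x'^{(t)}_C$ is nondecreasing until the first round in which $C$ is cut. From then on $\alpha_C=x_C$, the $U$ side must re-propose exactly $x_C$, and the weight can only fall. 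So every coordinate of $x^{(t)}$, $x'^{(t)}$ and $\alpha^{(t)}$ is eventually monotone and converges, which is what your limit argument needs. This is worth stating explicitly: a bounded increasing scalar such as the paper's $\sum_{j\in V}u_j-\sum_C\alpha_C$ does not by itself give convergence of the network.

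\textbf{Part~2, case $x^*_C<\alpha^*_C$.} This cannot be settled by $U$-optimality at $x^*$. The $U$-step maximizes over $\prod_{C'\ni i}[x_{C'},\alpha_{C'}]$, so lowering $i$'s other edges is never feasible there. If $x^{*i}$ saturates $i$'s budget, that feasible set is the single point $x^{*i}$, and optimality says nothing about reallocations; your argument only covers a slack budget. You need a history argument, as in the paper's case analysis:
\begin{itemize}
\item Here $C$ is never cut, so $x_C$ is nondecreasing and below its cap throughout.
\item Take an edge $C'$ that the deviation lowers with $\partial^-u_i^{C'}(x_{C'})<\partial^+u_i^{C}(x_C)$. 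Look at the last round in which $i$ raised $x_{C'}$ (or a sequence of such rounds if there is no last one).
\item $i$'s optimality in that round gives $\partial^-u_i^{C'}\geq\partial^+u_i^{C}$ at the values proposed then.
\item Concavity transfers this to the final values, since $x_{C'}$ has not grown and $x_C$ has not shrunk since.
\end{itemize}

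\textbf{Part~2, case $\alpha^*_C=x^*_C$.} Your persistence reason, ``accepted weights only grow'', is false and also points the wrong way. If $j$'s other weights grew, concavity would lower their marginals and could destroy $j$'s preference for them. What persists is a threshold:
\begin{itemize}
\item When $j$ cuts $C$ to $x_C$, its budget multiplier is at least $\partial^+u_j^{C}(x_C)$.
\item This multiplier is nondecreasing across rounds. Offers on edges $j$ did not cut only grow, and its own cuts never bind below its previous choice.
\item Every edge in $j$'s final support has left derivative at least the final multiplier.
\end{itemize}
Hence giving up weight on those edges to raise $C$ cannot strictly benefit $j$. Note also that $\alpha^*_C=x^*_C$ can arise from the original capacity being reached with no rejection. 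So these edges are not ``exactly those that some $V$ agent rejected'', and that subcase must be handled separately (the paper's case~(3)).
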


    \begin{proof}[Sketch of proof.]
        (1) After each full step, either some agent on the disposing side ($V$ in the algorithm) has strictly increased their payoff, or some capacities have been reduced. Then, the sum of the payoffs of agents on the disposing side minus the sum of the capacities is a strictly increasing and bounded quantity, so the algorithm converges.

        (2) Agents on the proposing side ($U$ in the algorithm) maximize their payoff at each step. Separability and concavity ensure that they do not regret having allocated weight to a connection at a prior step even if it has been decreased by its other member. Then, for any profitable deviation they have consisting in increasing some weight and (maybe) decreasing others, the weight increase has already been proposed and refused, and by concavity this implies that the deviation is not profitable to the other agent even if the rejection happened at an earlier step.

        The full proof is provided in \Cref{app.proof:bipartite_full}.
    \end{proof}

    Baïou and Balinski \cite{Baiou_02} \ left as an open question whether the Row-Greedy algorithm would converge to a stable matching (if it terminated) when the constraints were not integral; \Cref{thm:bipartite_full} confirms that is does.

    The algorithm can in fact be used to compute fully stable networks even in the presence of positive externalities, as show below.

    \begin{corollary} \label{cor:bipartite_pos_ext}
        Let $\game = (\agents,\edges,\payoff, \capacity, \budget)$ be a continuous \ref{asspt:cont}, separable \ref{asspt:sep} and concave \ref{asspt:conc} network game with positive externalities \ref{asspt:pos_ext}. Then $\game$ admits a fully stable network, and it can be computed by running \algshort~ on the game $\game' = (\agents,\edges,(\payoff(\cdot, \network^{-i}))_{i \in \agents}, \capacity, \budget)$.
    \end{corollary}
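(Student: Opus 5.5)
The plan is to reduce the corollary to \Cref{thm:bipartite_full}. The key observation is that, under separability, freezing the externalities changes each payoff only by an additive constant. Positive externalities then ensure that any block in the original game is also a block in the frozen game. Throughout, I take $(\agents,\edges)$ to be bipartite, as in the rest of this subsection, since \algshort~ requires it.

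\textbf{Step 1: the frozen game is well defined.} By \ref{asspt:sep}, write $\payoff_i(\network)=\sum_{\coalition\in\edges^i}\payoff_{\coalition,i}(\network_\coalition)+\sum_{\coalition\in\edges^{-i}}\payoff_{\coalition,i}(\network_\coalition)$. Applying \ref{asspt:pos_ext} coordinate by coordinate, each $\payoff_{\coalition,i}$ with $i\notin\coalition$ is nondecreasing. For any fixed $z\in\networksab$, the payoff $\payoff_i(\cdot,z^{-i})$ equals the own part $\sum_{\coalition\in\edges^i}\payoff_{\coalition,i}$ plus a constant $c_i(z)$. Hence the game $\game'$ of the statement has no externalities. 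It is continuous, separable and concave, and for every choice of $z$ it has the same argmax sets, the same set of improving deviations and the same fully stable networks. In particular, the steps of \algshort~ on $\game'$ do not depend on $z$, so $\game'$ is unambiguous up to constants even though the freezing point is not specified. By \Cref{thm:bipartite_full}, \algshort~ on $\game'$ converges to some $\network\in\networksab$ that is fully stable for $\game'$. We may then take $z=\network$.

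\textbf{Step 2: transfer full stability from $\game'$ to $\game$.} Condition 1 of \Cref{def:full_stab} is unchanged. Condition 2 involves only deviations $(\networkbis^i,\network^{-i})$, and for these $\payoff_i(\networkbis^i,\network^{-i})-\payoff_i(\network)$ equals the corresponding difference in $\game'$, so the condition transfers directly.

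For condition 3, fix $\coalition=\{i,j\}$ and $\networkbis^\coalition$ satisfying (a) and (b). For $i\in\coalition$, the difference $\payoff_i(\networkbis^\coalition,\network^{-\coalition})-\payoff_i(\network)$ splits into two parts:
\begin{itemize}
\item the own-part difference, which is exactly the $\game'$-difference;
\item the change on edges of $j$ not containing $i$; by (b) these weights weakly decrease, and the $\payoff_{\coalition',i}$ are nondecreasing, so this change is $\le 0$.
\end{itemize}
Hence if every member of $\coalition$ strictly gains in $\game$, every member strictly gains in $\game'$. Full stability in $\game'$ then produces some $j\in\coalition$ violating the budget. The budget inequality involves only the weights and $\budget$, which are the same in both games, so condition 3 holds in $\game$.

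The only delicate point is Step 1. The statement's formula $\payoff(\cdot,\network^{-i})$ seems to refer to the output $\network$ itself, which looks circular. Separability is what removes this fixed-point issue, and I would state explicitly that the choice of freezing point is immaterial. The remaining work is the sign argument in Step 2, which is routine once separability is used to isolate the external terms.
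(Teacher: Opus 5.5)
Your proof is correct and follows the paper's route: the paper applies \Cref{thm:bipartite_full} to the frozen game and then transfers full stability to $\game$ via \Cref{prop:pos_ext}, whose proof is exactly your Step 2 argument (a coalitional deviation only lowers weights outside $\edges^i$, so under positive externalities a strict gain in $\game$ is also a strict gain in the own-part game). Your additions are correct refinements rather than a different approach: you make the needed bipartite hypothesis explicit, you observe that under separability the freezing point in $\game'$ is immaterial, and you check condition 2 of \Cref{def:full_stab}, which the paper's lemma leaves implicit.
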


    To prove it, we need the following result.

    \begin{lemma}\label{prop:pos_ext}
        Let $\game = (\agents,\edges,\payoff, \capacity, \budget)$ be a network game  with positive externalities \ref{asspt:pos_ext} such that for all $i\in\agents$, $\payoff_i(\network) = \payoff^{int}_i(\network^i) + \payoff^{ext}_i(\network^{-i})$\footnote{This assumption is much weaker than separability, which applies to all variables.}. Then any fully stable network of the game $\game' = (\agents,\edges,\payoff^{int}, \capacity, \budget)$ is also a fully stable network of \  $\game$.
    \end{lemma}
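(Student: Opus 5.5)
The plan is to check the three conditions of \Cref{def:full_stab} for $\game$ directly, using that $\network$ satisfies them for $\game'$, and to use the decomposition $\payoff_i = \payoff^{int}_i + \payoff^{ext}_i$ to reduce every comparison of $\payoff_i$ to a comparison of $\payoff^{int}_i$. Fix a fully stable network $\network$ of $\game'$. Condition~1 is immediate, because $\game$ and $\game'$ have the same capacity and budget vectors and hence the same feasible set $\networksab$.

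For condition~2, take an agent $i$ and some $\networkbis^i \leq \network^i$ componentwise. The deviation $(\networkbis^i,\network^{-i})$ leaves $\network^{-i}$ unchanged, so
$$\payoff_i(\networkbis^i,\network^{-i}) - \payoff_i(\network) = \payoff^{int}_i(\networkbis^i) - \payoff^{int}_i(\network^i).$$
The right-hand side is $\leq 0$ by condition~2 for $\game'$.

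The main step is condition~3. Take a coalition $\coalition$ and $\networkbis^\coalition \in \networks^\coalition$ satisfying (a) and (b), and suppose (c) holds in $\game$: for every $i\in\coalition$, $\payoff_i(\networkbis^\coalition,\network^{-\coalition}) > \payoff_i(\network)$. Fix $i \in \coalition$ and write $z = (\networkbis^\coalition,\network^{-\coalition})$. Its own part is $z^i = (\networkbis^\coalition_{\coalition'})_{\coalition'\in\edges^i}$. Its external part is $z^{-i}$, which consists of the weights on $\edges^\coalition\setminus\edges^i$ (taken from $\networkbis^\coalition$) and the weights on $\edges^{-\coalition}$ (equal to those of $\network$). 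Since $\coalition \in \edges^i$, every edge in $\edges^\coalition\setminus\edges^i$ is different from $\coalition$. By (b) its weight is therefore weakly decreased, so $z^{-i} \preceq \network^{-i}$. Positive externalities applied at the feasible network $\network$ then give $\payoff_i(\network^i,z^{-i}) \leq \payoff_i(\network)$, i.e.\ $\payoff^{ext}_i(z^{-i}) \leq \payoff^{ext}_i(\network^{-i})$. Combining this with (c) yields
$$\payoff^{int}_i(z^i) > \payoff^{int}_i(\network^i).$$
This is exactly (c) for $\game'$, because the payoff of $i$ in $\game'$ at $z$ is $\payoff^{int}_i(z^i)$.

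So the same $\networkbis^\coalition$ satisfies (a)--(c) in $\game'$. Full stability of $\network$ in $\game'$ then gives some $j\in\coalition$ with $\sum_{\coalition'\in\edges^j}\networkbis^\coalition_{\coalition'} > \budget_j$. This is the required conclusion of condition~3 for $\game$, since it does not involve payoffs.

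The only delicate point is the bookkeeping for condition~3: one must check that every coordinate of $z^{-i}$ is weakly below the corresponding coordinate of $\network^{-i}$. This holds because the only increased edge, $\coalition$, lies in $\edges^i$ and so is not external to $i$. One must also note that positive externalities is stated at feasible networks, and $\network$ is feasible.
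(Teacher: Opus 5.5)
Your proof is correct and follows the paper's argument: positive externalities, together with $z^{-i}\preceq \network^{-i}$ (which holds because the only increased edge $\coalition$ lies in $\edges^i$), turn each member's strict gain in $\payoff_i$ into a strict gain in $\payoff^{int}_i$, and full stability of $\network$ in $\game'$ then forces the budget violation. You are slightly more careful than the paper on two points: it treats only the coalitional condition and applies positive externalities at $(z^i,\network^{-i})$, a network that need not lie in $\networksab$, whereas you also check conditions 1 and 2 and apply the assumption at the feasible network $\network$, using the additive decomposition to carry the inequality over.
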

    
    \begin{proof}
        Let $\network \in \networksab$ be a fully stable network of $\game'$, and suppose it is not stable for $\game$. Then, there exists $\coalition \in \edges$ and $\networkbis^\coalition \in \networks^\coalition$ that constitutes an admissible deviation according to \Cref{def:full_stab} and such that $\forall i \in \coalition, \payoff_i(\networkbis^\coalition, \network^{-\coalition}) > \payoff_i(\network)$. Let us write $z = (\networkbis^\coalition, \network^{-\coalition})$. Then, for any $i \in \coalition$,
        \begin{align}\label{eq:proof_pos_ext}
            \payoff_i(z) > \payoff_i(\network) &\Leftrightarrow \payoff_i(z^i, z^{-i}) > \payoff_i(\network^i, \network^{-i}) \nonumber \\ 
             & \Rightarrow \payoff_i(z^i, \network^{-i}) > \payoff_i(\network^i, \network^{-i}) \\
             & \Rightarrow \payoff^{int}_i(z^i) > \payoff^{int}_i(\network^i), \nonumber
        \end{align}
        which contradicts the full stability of $\network$ for $\game'$. The second line of equation \eqref{eq:proof_pos_ext} comes from the positive externalities assumption and that $z^{-i} \preceq \network^{-i}$ since $\networkbis^\coalition$ is an admissible deviation for $\coalition$.
    \end{proof}
    
    \begin{proof}[Proof of \Cref{cor:bipartite_pos_ext}]
        By \Cref{thm:bipartite_full}, \algshort~ produces a fully stable network of $\game'$. By \Cref{prop:pos_ext}, this network is also fully stable for $\game$.
    \end{proof}

    Once again, since we made the assumptions that the game is bipartite \ref{asspt:bip}, separable \ref{asspt:sep} and concave \ref{asspt:conc} (rather than quasiconcave), a natural question is whether those assumptions are simultaneously necessary. This is still an open question, however, we can prove that \algshort~ fails without any of those assumptions. First, if $(\agents, \edges)$ is not a bipartite graph, the algorithm is not well defined. Secondly, the two examples below show that the algorithm can converge to an unstable network if the game is not simultaneously separable and concave.

    \begin{example}
        Consider the problem $\agents = \{1, 2, 3\}, \edges = \{\{1, 2\}, \{1, 3\}\}$, with payoffs defined as
        $$\begin{aligned}
            \payoff_1(\network) &= \network_{\{1,2\}} +  \network_{\{1,3\}} - 3(\network_{\{1,2\}} - \network_{\{1,3\}})^2, \\
            \payoff_2(\network) &= \network_{\{1,2\}} (1 - \network_{\{1,2\}}), \\
            \payoff_3(\network) &=  \network_{\{1,3\}},
        \end{aligned}$$
        capacities all equal to 1 and budgets all equal to 2. Notice that $\payoff$ is continuous, concave, has no externalities, but is not separable. Let $U = \{1\}$ and $V = \{2, 3\}$. At the first step of the algorithm, agent $1$ computes their argmax, which is $\network_{\{1,2\}} = \network_{\{1,3\}} = 1$. Then, agent $2$ decreases $\network_{\{1,2\}}$ to $1/2$, and agent $3$ keeps $\network_{\{1,3\}}$ at 1. At the next step, the algorithm terminates. However, $\network = (1/2, 1)$ is not fully stable: $\payoff_1(\network) = 3/4$, and agent $1$ could increase their payoff to 1 by decreasing $\network_{\{1,3\}}$ to $1/2$.
    \end{example}

    \begin{example}
        Consider the problem $\agents = U \sqcup V, U = \{1, 2\}, V = \{3, 4, 5\}, \edges = \{\{i, j\}:i \in U, j \in V\}$, with payoffs defined as
        $$\begin{aligned}
            \payoff_1(\network) &= f_{2/3}(\network_{\{1,3\}}) + \network_{\{1,4\}} + (3/4) \network_{\{1,5\}}, \\
            \payoff_2(\network) &= f_{2/3}(\network_{\{2,4\}}) + \network_{\{2,3\}} + (3/4)  \network_{\{2,5\}}, \\
            \payoff_3(\network) &=  f_{1/3}(\network_{\{2,3\}}) + \network_{\{1,3\}}, \\
            \payoff_4(\network) &=  f_{1/3}(\network_{\{1,4\}}) + \network_{\{2,4\}},\\
            \payoff_5(\network) &=  2\network_{\{1,5\}} + \network_{\{2,5\}},
        \end{aligned}$$
        where $f_{y}(x):[0, 1] \to \mathbb{R}$ is defined by $f(x) = (x/2)\mathbbm{1}_{x \leq y} + (2x-3y/2)\mathbbm{1}_{x\geq y}$ (as an illustration, the function $f$ diplayed in \Cref{fig:counter_2} corresponds to $f_{1/2}$). All capacities are 1, $\budget = (3/2, 3/2, 1, 1, 1)$. Notice that $\payoff$ is continuous, quasiconcave (even convex and increasing), separable and has no externalities but is not concave. At the first step of the algorithm, agents $1$ and $2$ propose the weights $(\network_{\{1, 3\}}, \network_{\{1, 4\}}, \network_{\{1, 5\}}, \network_{\{2, 3\}}, \network_{\{2, 4\}}, \network_{\{2, 5\}}) = (1, 1/2, 0, 1, 1/2, 0)$, which are then reduced by agents $4$ and $5$ to $(1/2, 1/2, 0, 1/2, 1/2, 0)$. At the second step, $1$ and $2$ propose to increase to $(1/2, 1, 0, 1/2, 1, 0)$, which is once again reduced to $(1/2, 1/2, 0, 1/2, 1/2, 0)$. At the third step, $1$ and $2$ finally propose to increase to $\network = (1/2, 1/2, 1/2, 1/2, 1/2, 1/2)$, which is accepted as is, and the algorithm terminates. However, $\network$ is not fully stable: agents $1$ and $5$ get respective payoffs $\payoff_1(\network) = 9/8$ and $\payoff_5(\network) = 3/2$, while they could deviate to $\network' = (0, 1/2, 1, 1/2, 1/2, 0)$ and get respectively $\payoff_1(\network) = 5/4$ and $\payoff_5(\network) = 2$.
    \end{example}

    Finally, this algorithm can also be used on discrete problems, for instance many-to-many matching problems, but also more general problems where the marginal payoff of increasing the weight of a link by one unit is non-increasing rather than constant.

    \begin{proposition} \label{prop:discrete_matching}
        Let $\game = (\agents,\edges,\payoff, \capacity, \budget)$ be a continuous \ref{asspt:cont}, separable \ref{asspt:sep}, concave \ref{asspt:conc} and bipartite \ref{asspt:bip} network game with no externalities \ref{asspt:no_ext} and integral capacities and constraints\footnote{We assumed so far that the capacities $\capacity$ were less than 1 without loss of generality and for simplicity, however here we allow them to be larger than 1.}. Assume that payoffs are piecewise affine, such that slope changes only happen at integral points. If \algshort~ is run with the added rule that each time an element is chosen from an $\argmax$, it has to be integral, then it produces a fully stable integral network in at most $\sum_{\coalition \in \edges} \capacity_\coalition$ steps.
    \end{proposition}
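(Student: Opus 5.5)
The plan has three steps. First, show that with integral data the algorithm stays integral. Second, bound the number of steps by $\sum_{\coalition\in\edges}\capacity_\coalition$. Third, invoke \Cref{thm:bipartite_full}(2) for full stability.

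\textbf{Integrality.} I will argue by induction on the number of full steps that $\network$, $\network'$ and the modified capacities $\capacity$ stay integral throughout.

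Suppose the current $\network$ and $\capacity$ are integral. Proposer $i\in U$ then maximizes a separable concave function $\sum_{\coalition\ni i}\payoff_{i,\coalition}(\networkbis_\coalition)$ over $\prod_{\coalition\ni i}[\network_\coalition,\capacity_\coalition]\cap\{\sum\networkbis_\coalition\le\budget_i\}$. Each $\payoff_{i,\coalition}$ is piecewise affine with breakpoints at integers, so this is a separable concave maximization over a polymatroid-type region with integral bounds. It has an integral optimum: the greedy marginal-allocation argument raises weights unit by unit in decreasing order of slope, and a tie in slopes can always be resolved at an integer point. Equivalently, it is an LP over a totally unimodular system once each variable is split into unit segments.

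The same reasoning applies to each $j\in V$ maximizing over $\prod[0,\network'_\coalition]$ with budget $\budget_j$. The added rule in the statement forces an integral argmax element whenever the argmax is not a singleton, so the invariant is preserved. Capacities are reset only to values $\network_\coalition$, which are integral, so capacities stay integral too.

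\textbf{Termination bound.} I will reuse the potential from the proof of \Cref{thm:bipartite_full}(1), but count capacity decreases directly. Call a full step \emph{non-terminal} if the proposers produce $\network'\neq 0$. If no capacity is modified at that step, then no $j$ cut anything, so $\network=\network'$. I then need to show that the next proposal round yields the zero vector, or else that progress occurs.

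The cleaner route is this. If some $\network'_\coalition>\network_\coalition$ after the $V$-step, the capacity of $\coalition$ strictly drops. By integrality it drops by at least $1$, and capacities never increase and stay nonnegative, so there are at most $\sum_\coalition\capacity_\coalition$ such steps.

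If instead no capacity changed, every proposal was accepted. By the tie-breaking rule, which prefers $0$ increments as encoded via the argmax over $[\network_\coalition,\capacity_\coalition]$, the next proposal round reproduces $\network$. I must check that the algorithm's test "$\network'=0$" is meant as "no increase proposed". I will interpret it this way, consistently with \Cref{thm:bipartite_full}, so termination then occurs. Hence the total number of steps is at most $\sum_\coalition\capacity_\coalition$, up to one final verification step.

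\textbf{Full stability.} The output $\network$ is a limit, indeed the final iterate, of \algshort, so \Cref{thm:bipartite_full}(2) applies verbatim. I expect its proof uses only the argmax properties of each step and concavity, not which argmax element is selected, so the integral selection does not affect it. I will state this explicitly and verify that no step of that proof relied on a specific tie-breaking.

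The main obstacle is the integrality claim for the constrained separable concave maximization with the budget constraint. I would prove it by the unit-increment greedy exchange argument, since breakpoints are integers. Any fractional optimum with two fractional coordinates can be shifted along $e_\coalition-e_{\coalition'}$ without decreasing the objective, because both coordinates lie on affine pieces. Alternatively, a single fractional coordinate can be rounded when the budget is integral. Either move reduces the number of fractional coordinates, so an integral optimum exists.
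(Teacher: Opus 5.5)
Your proposal is correct and follows the paper's proof: integrality is preserved because capacities and residual budgets stay integral, so the piecewise-affine concave subproblems always have integral maximizers; every round in which a $V$ agent cuts a proposal lowers $\sum_{\coalition}\capacity_\coalition$ by at least one; and full stability is inherited from \Cref{thm:bipartite_full}(2), since the integral selection is an admissible run of \algshort. Your exchange argument, your reading of the termination test as ``no increase proposed'', and your separate handling of a fully accepted round are more careful than the paper's one-line claim that capacities drop at every step. To land exactly on the stated bound rather than ``up to one'' extra round, note that a fully accepted non-terminal round contains a genuine increase and so needs some remaining capacity $\geq 1$; it can therefore be preceded by at most $\sum_{\coalition}\capacity_\coalition - 1$ cutting rounds.
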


    \begin{proof}
        Under the assumptions made in the statement of the proposition, at each step, all capacity constraints are integral (integer value), as well as the remaining budget of each agent. Since payoffs are piecewise affine, all the $\argmax$ that are computed must therefore contain integral vectors. Then, throughout the algorithm, all weights remain integral, and thus the output is also integral. Moreover, the sum of all capacities decreases by at least 1 at each step, so the algorithm terminates in at most $\sum_{\coalition \in \edges} \capacity_\coalition$ steps.
    \end{proof}

    \section{Discussion} \label{sec:disc}

Our framework encompasses many existing models of strategic network formation as well as stable matching problems. We establish the existence of a stable weighted hypergraphic network under budget constraints and mild assumptions on payoff functions. Regarding full stability, we either provide existence results for fully stable networks or counter-examples showing that their existence is not guaranteed. Nevertheless, the theory is not yet complete, and several open problems remain.

  Without budget constraints, negative externalities preclude existence even for linear payoff functions and bipartite graphs. The existence of a fully stable network is recovered when payoff functions are separable, quasiconcave and continuous and exhibit only positive externalities. However, it remains unclear whether this condition is necessary; for instance, does existence still hold for non-separable quasiconcave payoffs with positive (or without) externalities?.   
  
With budget constraints, separability and positive externalities do not suffice to guarantee the existence of a fully stable network, even when payoffs are monotonic and without externalities. However, we are able to prove existence under separability, concavity, and positive externalities, but only for bipartite graphs. Does this result extend to general graphs or hypergraph network games?

 Also, note that the \alg \ algorithm is sufficient to establish the existence of a fully stable network under budget constraints, separability, concavity, and positive externalities of the payoff functions. However, it may require an infinite number of steps to converge. The existence of an algorithm that computes a fully stable network in finite (or polynomial) time remains an open question. There are good reasons to believe that a more efficient algorithm than ours exists. Indeed, our algorithm is a generalization of Baïou and Balinski’s Row-Greedy algorithm; however, in the same article they propose a more efficient (i.e., polynomial-time) algorithm—albeit more complex—called the Inductive Algorithm. It is unclear how this algorithm could be extended to the non-linear setting.

 Finally, the question of designing algorithms for the non-bipartite case—namely, general graphs or hypergraphs—remains open. With budget constraints, linear payoffs, and no externalities, Abeledo and Rothblum \cite{abeledo_94} propose a reduction from non-bipartite graphs to the bipartite case that allows one to compute a fully stable network in polynomial time; for hypergraphs, the problem is known to be PPAD-complete \cite{kintali_ppad_13}. In the presence of non-linear payoffs with or without positive externalities, the existence of fully stable network and of an (efficient) algorithm to compute one of them remains an open problem.

    \bibliography{biblio}
    
    \appendix

	\section{Omitted Proofs}
    
	\subsection{Proof of \texorpdfstring{\Cref{thm:hypergraph_stable}}{}} \label{app.proof:hypergraph_stable}
	
	Consider the correspondence $\Psi: \networksab \to 2^{\networksab}$ defined as follows.
	For any $\network,\network' \in \networksab$, we have $\network' \in \Psi(\network)$ if and only if there exists a coalition $\coalition \in \edges$ such that one of the following conditions holds:
	\begin{enumerate}
		\item There exists a weight $\networkbis_\coalition \in [0,\network_\coalition[$ such that $\network' = (\networkbis_\coalition,\network_{-\coalition})$ and
			$$\exists i \in \coalition, \payoff_i(\network') > \payoff_i(\network).$$
		\item There exists a weight $\networkbis_\coalition \in ]\network_\coalition,\capacity_\coalition]$ such that $\network' = (\networkbis_\coalition,\network_{-\coalition})$ and
			$$[\forall i \in \coalition, \payoff_i(\network') > \payoff_i(\network)] \wedge [\forall i \in \coalition, \sum_{\substack{\coalition' \neq \coalition\\\coalition' \ni i}} \network_{\coalition'} + \networkbis_\coalition \leq \budget_i].$$
	\end{enumerate}
	We also recall the following result due to Yannelis and Prabhakar (see Theorem 5.2 in \cite{yannelis_prabhakar_1983}).
	
	\begin{theorem}[Yannelis--Prabhakar \cite{yannelis_prabhakar_1983}] \label{thm_yannelis_prabhakar}
		Let $K$ be a nonempty, compact and convex subset of $\mathbb{R}^n$.
		Moreover, let $F: K \to 2^K$ be a lower hemicontinuous correspondence such that, for every $\network \in K$, $\network \notin \mathrm{Conv}(F(\network))$.
		Then there exists $\network^\ast \in K$ such that $F(\network^\ast) = \emptyset$.
	\end{theorem}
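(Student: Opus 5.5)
We would argue by contradiction. Suppose that $F(\network) \neq \emptyset$ for every $\network \in K$. Then we would build a continuous map $f: K \to K$ with $f(\network) \in \mathrm{Conv}(F(\network))$ for all $\network$. Brouwer's fixed point theorem then gives $\network^\ast = f(\network^\ast) \in \mathrm{Conv}(F(\network^\ast))$, which contradicts the hypothesis. So the whole proof reduces to producing a continuous selection of the convexified correspondence $G(\network) = \mathrm{Conv}(F(\network))$.

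\textbf{Step 1: $G$ is lower hemicontinuous with nonempty convex values in $K$.}
Nonemptiness and convexity are immediate, and $G(\network) \subset K$ because $K$ is convex. For lower hemicontinuity, fix an open set $W$ with $G(\network_0) \cap W \neq \emptyset$ and pick $y = \sum_{k=1}^m \lambda_k y_k \in W$ with $y_k \in F(\network_0)$. Since $W$ is open, there is $\varepsilon>0$ such that $\sum_k \lambda_k y'_k \in W$ whenever $\|y'_k - y_k\| < \varepsilon$ for all $k$. Lower hemicontinuity of $F$, applied to each ball $B(y_k,\varepsilon)$, gives finitely many neighbourhoods of $\network_0$. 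On their intersection we can choose $y'_k \in F(\network)\cap B(y_k,\varepsilon)$, so $G(\network)\cap W \neq \emptyset$ there.

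\textbf{Step 2: continuous selection.}
This is the main obstacle. Michael's classical selection theorem requires closed convex values, but $G(\network)$ need not be closed. Moreover, the naive route via approximate selections followed by a limit breaks down: a partition-of-unity construction easily gives continuous $f_\varepsilon$ with $f_\varepsilon(\network) \in G(\network) + B(0,\varepsilon)$, but passing to the limit only yields $\network^\ast \in \overline{G(\network^\ast)}$, which is not enough.

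We would therefore invoke Michael's finite-dimensional variant (Michael 1956, Theorem 3.1$'''$). It states that a lower hemicontinuous correspondence from a metric (hence paracompact, perfectly normal) space into $\mathbb{R}^n$ with nonempty convex values, not necessarily closed, admits a continuous selection. If a self-contained argument were wanted, we would prove this by induction on the dimension of the affine hull of the values, working on the relative interiors: one shows that $\network \mapsto \mathrm{ri}\, G(\network)$ is still lower hemicontinuous on the open sets where $\dim \mathrm{aff}\, G(\network)$ is maximal, and selects there using open lower sections. We expect this bookkeeping to be the delicate part.

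\textbf{Step 3: conclude.}
With a continuous selection $f$ of $G$ in hand, $f$ maps the nonempty compact convex set $K$ into itself. Brouwer's theorem gives a fixed point $\network^\ast = f(\network^\ast) \in \mathrm{Conv}(F(\network^\ast))$, contradicting the assumption that $\network \notin \mathrm{Conv}(F(\network))$ for every $\network \in K$. Hence there is some $\network^\ast \in K$ with $F(\network^\ast)=\emptyset$.
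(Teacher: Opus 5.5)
The paper gives no proof of this statement; it imports it from Yannelis--Prabhakar. Your argument is correct and is the standard proof of that cited result: assume every value is nonempty, show $\mathrm{Conv}(F)$ is lower hemicontinuous with nonempty convex values in $K$, take a continuous selection via Michael's selection theorem for lower hemicontinuous maps from a metric space into nonempty convex (not necessarily closed) subsets of $\mathbb{R}^n$, and conclude with Brouwer. Once that theorem is cited, your Step 1 is exactly the lemma needed, and the sketched induction on $\dim \mathrm{aff}\, G(x)$ can be dropped.
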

		
	Since $\networksab$ is nonempty, compact and convex, it remains to show that the correspondence $\Psi$ satisfies the assumptions of Theorem \ref{thm_yannelis_prabhakar}.
		
	\paragraph{Step 1. $\Psi$ is lower hemicontinuous}
		
	Since $\networksab$ is a subset of a finite-dimensional Euclidean space, recall that the correspondence $\Psi$ is lower hemicontinuous if and only if for every $\network \in \networksab$, every sequence $(\network^n)_{n \in \mathbb{N}}$ converging to $\network$, and every $\network' \in \Psi(\network)$, there exists a subsequence $(\network^{n_m})_{m \in \mathbb{N}}$ and a sequence $((\network')^m)_{m \in \mathbb{N}}$ such that $(\network')^m \longrightarrow \network'$ and $(\network')^m \in \Psi(\network^{n_m})$ for all $m$.
		
	Fix $\network \in \networksab$, a sequence $(\network^n)_{n \in \mathbb{N}}$ converging to $\network$, and $\network' \in \Psi(\network)$.
	Moreover, by definition of $\Psi$, there exists a coalition $\coalition \in \edges$ such that either condition $(1)$ or condition $(2)$ holds.
		
	\paragraph{Case 1: unilateral decrease}
		
	Assume condition $(1)$ holds.
	Then there exists $\networkbis_\coalition \in [0,\network_\coalition[$ such that $\network' = (\networkbis_\coalition,\network_{-\coalition})$ and
		$$\exists i \in \coalition, \payoff_i(\network') > \payoff_i(\network).$$
		
	Since $\networkbis_\coalition < \network_\coalition$, there exists $\bar n_1$ such that $\networkbis_\coalition < \network^n_\coalition$ for all $n \geq \bar n_1$.
		
	By continuity of $\payoff_i$, there exists $\bar n_2$ such that $\payoff_i(\networkbis_\coalition,\network^\ell_{-\coalition}) > \payoff_i(\network^\ell)$ for all $n \geq \bar n_2$.
		
	Let $\bar n = \max \{\bar n_1,\bar n_2\}$, define $\network^{n_m} = \network^{\bar n + m}$, and set
	$$(\network')^m = (\networkbis_\coalition,\network^{n_m}_{-\coalition}).$$
		
	Since $\networkbis_\coalition < \network^{n_m}_\coalition$, feasability with respect to $\capacity$ and $\budget$ is preserved, so $(\network')^m \in \networksab$ for all $m$.
	Moreover, by construction, $(\network')^m$ belongs to $\Psi(\network^{n_m})$ for all $k$.
		
	\paragraph{Case 2: multilateral increase under constraints}
		
	Assume condition $(2)$ holds.
	Then there exists $\networkbis_\coalition \in ]\network_\coalition,\capacity_\coalition]$ such that $\network' = (\networkbis_\coalition,\network_{-\coalition})$ and
	$$[\forall i \in \coalition, \payoff_i(\network') > \payoff_i(\network)] \wedge [\forall i \in \coalition, \sum_{\substack{\coalition' \neq \coalition\\\coalition' \ni i}} \network_{\coalition'} + \networkbis_\coalition \leq \budget_i].$$

	Define
	$$\agents^<_\coalition = \{i \in \coalition : \sum_{\substack{\coalition' \neq \coalition\\\coalition' \ni i}} \network_{\coalition'} + \networkbis_\coalition < \budget_i\} \text{ and } \agents^=_\coalition = \{i \in \coalition : \sum_{\substack{\coalition' \neq \coalition\\\coalition' \ni i}} \network_{\coalition'} + \networkbis_\coalition = \budget_i\}.$$
	(If one of these sets is empty, the argument simplifies and we omit it.)
		
	Moreover, let
	$$\hat \networkbis^n_\coalition = \min \{\networkbis_\coalition, \min \{\budget_i - \sum_{\substack{\coalition' \neq \coalition\\\coalition' \ni i}} \network^n_{\coalition'} : i \in \agents^=_\coalition\}\}.$$
		
	Since $\network_\coalition < \networkbis_\coalition$, there exists $\bar n_1$ such that $\network^n_\coalition < \networkbis_\coalition$ for all $n \geq \bar n_1$.
		
	For every $i \in \agents$, by continuity of $\payoff_i$, there exists $\bar n_{2,i}$ such that $\payoff_i(\hat \networkbis^n_\coalition,\network^n_{-\coalition}) > \payoff_i(\network^n)$ for all $n \geq \bar n_{2,i}$.

	For every $i \in \agents^<_\coalition$, since $\sum_{\substack{\coalition' \neq \coalition\\\coalition' \ni i}} \network_{\coalition'} + \networkbis_\coalition < \budget_i$, there exists $\bar n_{3,i}$ such that $\sum_{\substack{\coalition' \neq \coalition\\\coalition' \ni i}} \network^n_{\coalition'} + \networkbis_\coalition < \budget_i$ for all $n \geq \bar n_{3,i}$.

	For every $i \in \agents^=_\coalition$, since $\sum_{\substack{\coalition' \neq \coalition\\\coalition' \ni i}} \network_{\coalition'} + \networkbis_\coalition = \budget_i$ and $\network_\coalition < \networkbis_\coalition$, we have $\sum_{\substack{\coalition' \neq \coalition\\\coalition' \ni i}} \network_{\coalition'} + \network_\coalition < \budget_i$, which implies that there exists $\bar n_{3,i}$ such that $\network_\coalition < \budget_i - \sum_{\substack{\coalition' \neq \coalition\\\coalition' \ni i}} \network^n_{\coalition'}$ for all $n \geq \bar n_{3,i}$.
		
	Let $\bar n = \max \{\bar n_1,\max \{\bar n_{2,i} : i \in \agents\},\max \{\bar n_{3,i} : i \in \agents\}\}$, define $\network^{n_m} = \network^{\bar n + m}$, and set
		$$(\network')^m = (\hat \networkbis^{n_m}_\coalition,\network^{n_m}_{-\coalition}).$$
		
	We first check that, for every $m$, $(\network')^m$ belongs to $\networksab$:
	\begin{itemize}
		\item Since $\networkbis_\coalition \leq \capacity_\coalition$, we immediately have
		$$\min \{\networkbis_\coalition, \min \{\budget_i - \sum_{\substack{\coalition' \neq \coalition\\\coalition' \ni i}} \network^{n_m}_{\coalition'} : i \in \agents^=_\coalition\}\} \leq \capacity_\coalition.$$
		\item Let $j \in \agents$.
		\begin{itemize}
			\item If $j \in \coalition$, then, by assumption,
			$$\sum_{\substack{\coalition' \neq \coalition\\\coalition' \ni j}} \network_{\coalition'} + \networkbis_\coalition \le \budget_j.$$
			Therefore,
			$$\sum_{\substack{\coalition' \neq \coalition\\\coalition' \ni j}} \network^{n_m}_{\coalition'} + \min \{\networkbis_\coalition, \min \{\budget_i - \sum_{\substack{\coalition' \neq \coalition\\\coalition' \ni i}} \network^{n_m}_{\coalition'} : i \in \agents^=_\coalition\}\} \leq \budget_j.$$
			\item If $j \notin \coalition$, feasibility follows directly from the fact that
			$$\sum_{\substack{\coalition' \in \edges\\ \coalition' \ni j}} \network^{n_m}_{\coalition'} \le \budget_j,$$
			since the weights of coalitions not equal to $\coalition$ are unchanged. 
		\end{itemize}
	\end{itemize}
	Hence, $((\network')^m)_{m \in \mathbb{N}}$ is a sequence in $\networksab$.
		
	We now show that, for every $m$, $(\network')^m$ belongs to $\Psi(\network^{n_m})$.
	Consider the coalition $\coalition$ together with the weight $\hat \networkbis^{n_m}_\coalition$:
	\begin{itemize}
		\item By construction of $\bar n$, the following properties hold:
		\begin{itemize}
			\item $\networkbis_\coalition > \network^{\ell_k}_\coalition$ (since $\networkbis_\coalition > \network_\coalition$), and $\budget_i - \sum_{\substack{\coalition' \neq \coalition\\\coalition' \ni i}} \network^{n_m}_{\coalition'} > \network^{n_m}_\coalition$, for every $i \in \agents^=_\coalition$.
			Therefore,
			$$\hat \networkbis^{n_m}_\coalition > \network^{n_m}_\coalition.$$
			\item For every $i \in \coalition$, by continuity of $\payoff_i$,
			$$\payoff_i((\network')^m) > \payoff_i(\network^{n_m}).$$
			\item For every $i \in \agents^<_\coalition$, since $\sum_{\substack{\coalition' \neq \coalition\\\coalition' \ni i}} \network^{n_m}_{\coalition'} + \networkbis_\coalition < \budget_i$, we have
			$$\sum_{\substack{\coalition' \neq \coalition\\\coalition' \ni i}} \network^{n_m}_{\coalition'} + \hat \networkbis^{n_m}_\coalition < \budget_i.$$
		\end{itemize}
		\item For every $i \in \agents^=_\coalition$, since $\sum_{\substack{\coalition' \neq \coalition\\\coalition' \ni i}} \network^{n_m}_{\coalition'} + (\budget_i - \sum_{\substack{\coalition' \neq \coalition\\\coalition' \ni i}} \network^{n_m}_{\coalition'}) = \budget_i$, we have
		$$\sum_{\substack{\coalition' \neq \coalition\\\coalition' \ni i}} \network^{n_m}_{\coalition'} + \min \{\networkbis_\coalition, \min \{\budget_j - \sum_{\substack{\coalition' \neq \coalition\\\coalition' \ni j}} \network^{n_m}_{\coalition'} : j \in \agents^=_\coalition\}\} = \sum_{\substack{\coalition' \neq \coalition\\\coalition' \ni i}} \network^{n_m}_{\coalition'} + \hat \networkbis^{n_m}_\coalition \leq \budget_i.$$
	\end{itemize}
	Hence, the point $(\network')^m$ satisfies condition $(2)$ in the definition of $\Psi(\network^{n_m})$, thus $(\network')^m \in \Psi(\network^{n_m})$.
		
	\paragraph{Step 2. For every $\network \in \networksab$, $\network \notin \mathrm{Conv}(\Psi(\network))$}
		
	Assume by contradiction that $\network \in \mathrm{Conv}(\Psi(\network))$.
		
	Then, there exists $(\network^n)_{n \in M} \subset \Psi(\network)$ and weights $(\lambda^n)_{n \in M}$ such that
	$$\network = \sum_{n \in M} \lambda^n \network^n.$$
	Fix a coalition $\coalition \in \edges$ such that $\network^n_\coalition \neq \network_\coalition$, for some $n$.
	Let $M_\coalition = \{n \in M : \network^n_\coalition \neq \network_\coalition\}$.
	Then $\network_\coalition$ is a convex combination of $(\network^n_\coalition)_{n \in M_\coalition}$.
		
	There must exist $\bar n_1 \in M_\coalition$ such that $(\network^{\bar n_1}_\coalition,\network_{-\coalition})$ satisfies condition $(1)$, and $\bar n_2 \in M_\coalition$ such that $(\network^{\bar n_2}_\coalition,\network_{-\coalition})$ satisfies condition $(2)$; otherwise, quasiconcavity of $\payoff_i$ ($i \in \coalition$) yields a contradiction.
		
	Let $i \in \coalition$ such that $\payoff_i(\network^{\bar n_1}_\coalition,\network_{-\coalition}) > \payoff_i(\network)$.
	Note that $\payoff_i(\network^{\bar n_2}_\coalition,\network_{-\coalition}) > \payoff_i(\network)$.
	Since $\network_\coalition$ lies strictly between $\network^{\bar n_1}_\coalition$ and $\network^{\bar n_2}_\coalition$, quasiconcavity of $\payoff_i$ implies that
	$$\payoff_i(\network) \geq \min \{\payoff_i(\network^{\bar n_1}_\coalition,\network_{-\coalition}),\payoff_i(\network^{\bar n_2}_\coalition,\network_{-\coalition})\} > \payoff_i(\network),$$
	a contradiction.
		
	Hence, $\network \notin \mathrm{Conv}(\Psi(\network))$.
		
	\paragraph{Step 3. Application of Yannelis--Prabhakar theorem.}
		
	From Steps 1 and 2, all assumptions of Theorem \ref{thm_yannelis_prabhakar} are satisfied.
	Therefore, there exists $\network^\ast \in \networksab$ such that $\Psi(\network^\ast) = \emptyset$.
	This implies exactly the two stability conditions for $\network^\ast$.
    \hfill $\qed$

	\subsection{Proof of \texorpdfstring{\Cref{lemma:stable_matching}}{}} \label{app.proof:stable_matching}
    
    Let $\network$ be a stable matching of a fractional matching problem $\game = (\agents,\edges,\payoff, \capacity, \budget)$ with $\payoff(\network) = (\sum_{\coalition \in \edges^i}A^i_\coalition x_\coalition)_{i \in \agents}$. No agent has an incentive to decrease one of their weights since all weights for which they have a negative coefficient are 0. For any coalition $\coalition$ in which all members have a positive coefficient, one member $i$ verifies $\sum_{\coalition' \ni i, A^i_{\coalition'} \geq A^i_\coalition} \network_{\coalition'} = \budget_i$, so increasing $\network_\coalition$ while decreasing other weights of $i$ cannot strictly increase $i$'s payoff.
      
    On the other hand, let $\network$ be a fully stable network of a fractional matching problem $\game = (\agents,\edges,\payoff, \capacity, \budget)$ with $\payoff(\network) = (\sum_{\coalition \in \edges^i}A^i_\coalition x_\coalition)_{i \in \agents}$ and $\vert \coalition \vert = 2, \forall \coalition \in \edges$. If some weight $\network_\coalition >0$ and $\exists i \in \coalition$, $A^i_\coalition <0$, then $i$ would have an incentive to decrease $\network_\coalition$, which contradicts full stability. If there is an edge $\coalition = \{i, j\} \in \edges$ such that $\forall k \in \coalition$, $A^k_\coalition>0$ and $\sum_{\coalition' \ni k, A^k_{\coalition'} \geq A^k_\coalition} \network_{\coalition'} < \budget_k$, then $\exists k, \ell \in \agents$ such that $\{i, k\}, \{j, \ell\} \in \edges$, $A^i_{\{i, k\}}<A^i_{\{i, j\}}$, $A^j_{\{j, \ell\}}<A^j_{\{i, j\}}$, $\network_{\{i, k\}} > 0$ and $\network_{\{j, \ell\}} > 0$. Then, both $i$ and $j$ have an incentive to increase $\network_\coalition$ by some value $\varepsilon$ while decreasing $\network_{\{i, k\}}$ and $\network_{\{j, \ell\}}$ by $\varepsilon$, which contradicts full stability. \hfill $\qed$

    \subsection{Proof of \texorpdfstring{\Cref{prop:min_argmax}}{}} \label{app.proof:min_argmax}

    Let $\network \in \networks$ that satisfies equation \eqref{eq:min_argmax}.
    Let $\coalition \in \edges$. By quasiconcavity and the definition of $\network_\coalition$, $\payoff_i(\cdot, x_{-\coalition})$ is non-decreasing on $[0, \network_\coalition]$ for all $i \in \coalition$ and non-increasing on $[ \network_\coalition, 1]$ for some $i\in \coalition$. As a consequence, no agent can increase their payoff by unilaterally decreasing one of their weight, and no coalition $\coalition$ can increase the payoff of all its members by increasing increasing $\network_\coalition$ (and maybe decreasing other weights): indeed, the agent for whom $\payoff_i(\cdot, x_{-\coalition})$ is non-increasing on $[ \network_\coalition, 1]$ does not gain from the increase of $\network_\coalition$, neither from a decrease of weights $\network_{\coalition'}, i\in \coalition'$ because $\payoff_i(\cdot, x_{-\coalition'})$ is non-decreasing on $[0, \network_{\coalition'}]$, and they cannot gain anything from a decrease of a weight $\network_{\coalition'}, i\notin \coalition'$ because of the positive externalities assumption.

    Let $\network\in \networks$ and assume $\network$ does not satisfy equation \eqref{eq:min_argmax}. By quasiconcavity, $\forall \coalition \in \edges, \forall i \in \coalition$, $\argmax \payoff_i(\cdot, x_{-\coalition})$ is a closed interval. Then, $\exists \coalition \in \edges$ such that either
    \begin{enumerate}
        \item $\exists i \in \coalition$, $x_\coalition > \argmax \payoff_i(\cdot, x_{-\coalition})$, or
        \item $\forall i \in \coalition$, $x_\coalition < \argmax \payoff_i(\cdot, x_{-\coalition})$.
    \end{enumerate}
        
    Case (1): $\network$ cannot be stable since $i$ has an incentive to decrease $\network_\coalition$.

    Case (2): $\network$ cannot be stable since all members of $\coalition$ have an incentive to increase $\network_\coalition$.

    Therefore, the fully stable networks are exactly those that satisfy equation \eqref{eq:min_argmax}.
    Finally, if each agent's payoff $\payoff_i$ is component-wise \emph{strictly} quasiconcave, $\argmax \payoff_i(\cdot, x_{-\coalition})$ is reduce to a single point for all $\coalition \in \edges$ and for all $i \in \coalition$, so there is a unique fully stable network. \hfill $\qed$

    \subsection{Proof of \texorpdfstring{\Cref{thm:bipartite_full}}{}}\label{app.proof:bipartite_full}

    \paragraph{Part 1}
        
    The quantity $\sum_{j \in V}\payoff_j(\network) - \sum_{\coalition\in\edges} \capacity_C$ is strictly increasing throughout the algorithm. Indeed, $V$ agents either keep their weights (and their payoff) constant, or their change some weights to strictly increase their payoff, and if no weight has changed it means that either the algorithm has terminated or some capacities have changed. Moreover, $\sum_{j \in V}\payoff_j(\network) - \sum_{\coalition\in\edges} \capacity_C$ is upper bounded by $\sum_{j \in V} \max_{\network \in \networksab}\payoff_j(\network)$, so the algorithm converges (potentially in an infinite number of steps).

    \paragraph{Part 2}
    
    Let us write, by separability, $\payoff_i  = \sum_{\coalition \ni i}\payoff_i^\coalition$ for all $i \in \agents$. Let $i \in U$. Let $\networkbis^i\in\networks^i,~ \networkbis^i\preceq \network^i$. For any $\coalition$ such that $\network_\coalition >0$, by (quasi)concavity $\payoff_i^\coalition$ is non-decreasing on $[0, \network_\coalition]$: either $i$ chose to increase $\network_C$ to its current value, or they increased it further and it was later decreased by a $V$ agent. Therefore, $\payoff_i(\networkbis^i) \leq \payoff_i(\network^i)$, so there are no profitable unilateral deviations. 

    Let $\coalition = \{i, j\}\in\edges$, let $\networkbis^C \in \networks^C$ such that $\networkbis^\coalition_\coalition \geq \network_\coalition$, $\forall \coalition' \in \edges^\coalition \backslash \{\coalition\}$, $\networkbis^\coalition_{\coalition'} \leq \network_{\coalition'}$, and $(\networkbis^C, \network^{-C})\in \networksab$. If $\payoff_i(\networkbis^C) \leq \payoff(\network)$, $\networkbis^C$ is not a mutually profitable deviation. Assume $\payoff_i(\networkbis^C) > \payoff(\network)$, then, we must have $\capacity_C = \network_C$. Indeed, if $i$ does not saturate their budget, for the algorithm to terminate they must have proposed to increase $\network_C$ and be denied. If they do saturate their budget, then $\vert \networkbis^C \vert_1 \leq \vert \network^C \vert_1$, and since $\payoff_i(\networkbis^C) > \payoff(\network)$, $\exists \coalition'$ such that $\networkbis_{C'} < \network_{C'}$ and $\partial^-\payoff^{C'}_i(\network_{C'}) < \partial^+\payoff^{C}_i(\network_{C})$\footnote{We denote by $\partial^-$ and $\partial^+$ respectively the left and the right derivatives.}. Consider the last step at which $\network_{\coalition'}$ was increased, and let us call $z$ the state of the algorithm at the start of the step. Four cases are possible:
    \begin{enumerate}
        \item $i$ tried to increase $z_{\coalition}$ but was denied, or
        \item $i$ succeeded to increase $z_{\coalition}$, or
        \item $i$ did not try to increase $z_{\coalition}$ because it was equal to $\capacity_C$, or
        \item $i$ did not try to increase $z_{\coalition}$ because it was not in the support of the argmax.
    \end{enumerate}
    
    Case (1): Then $\capacity_C$ was set to $z_C$, and from then on this weight could only be decreased by $j$ and $\capacity_C$ with it, so $\capacity_C = \network_C$.
    
    Case (2): $i$ increased both $z_C$ and $z_{C'}$. Either they increased $z_C$ until $\capacity_C$, and by the same argument as before $\capacity_C = \network_C$, or they stopped at some point $z'_C<\capacity_C$, which means that $\partial^+\payoff^{C'}_i(z_{C'}) > \partial^+\payoff^{C}_i(z'_{C})$. Since we know that $ \partial^+\payoff^{C}_i(\network_{C}) > \partial^-\payoff^{C'}_i(\network_{C'}) \geq \partial^+\payoff^{C'}_i(z_{C'})$, necessarily $\network_C < z'_C$, which implies that $j$ has decreased it at some step and thus $\capacity_C = \network_C$.

    Case (3): Nothing to do.

    Case (4): If $z_{\coalition}$ was not in the support of the argmax, then $\network_C < z_C$ and by the same argument $\capacity_C = \network_C$.

    Now that we have $\capacity_C = \network_C$, consider the step at which $j$ set $\capacity_C = \network_C$. At the end of that step, their budget constraint was saturated, and their weight vector $z^j$ was such that $z_C = \network_C$ and for all $\coalition' \in \text{support}(z^j)$, $\partial^-\payoff_j^{\coalition'}(z_{\coalition'}) \geq \partial^+\payoff_j^{\coalition}(\network_{\coalition})$. This inequalities must remain true until the end of the algorithm: $\network_C$ remains constant, if $z_{\coalition'}$ decreases $\partial^-\payoff_j^{\coalition'}(z_{\coalition'})$ decreases also, and $j$ only has an incentive to increase $z_{\coalition'}$ to some value $z'_{\coalition'}$ if $\partial^-\payoff_j^{\coalition'}(z'_{\coalition'}) \geq \partial^+\payoff_j^{\coalition}(\network_{\coalition})$. Therefore, no deviation that consists in increasing $\network_C$ can be profitable to $j$. In conclusion, no agent or pair of agents have any profitable deviations, so $\network$ is fully stable. \hfill $\qed$
    
\end{document}